\DeclareMathOperator{\dd}{d}
\DeclareMathOperator*{\argmax}{arg max}
\DeclareMathOperator*{\argmin}{arg min}
\theoremstyle{plain}
\newtheorem{theorem}{Theorem}[section]
\newtheorem{lemma}[theorem]{Lemma}
\newtheorem*{lemma*}{Lemma}
\newtheorem*{theorem*}{Theorem}
\newtheorem{remark}[theorem]{Remark}
\theoremstyle{remark}
\newtheorem{definition}[theorem]{Definition}
\begin{document}

\begin{frontmatter}
\title{A continuous multiple hypothesis testing framework for optimal exoplanet detection}
\runtitle{Continuous multiple hypothesis testing for optimal exoplanet detection}

\begin{aug}
\author[A]{\fnms{Nathan C.} \snm{Hara}\ead[label=e1]{nathan.hara@unige.ch}\ead[label=e2]{nathan.hara@lam.fr}},
\author[B]{\fnms{Thibault} \snm{de Poyferr\'e}},
\author[A]{\fnms{Jean-Baptiste} \snm{Delisle}} and
\author[C]{\fnms{Marc} \snm{Hoffmann}}


\address[A]{Observatoire Astronomique de l’Université de Genève, 51 Chemin de Pegasi b, 1290 Versoix, Switzerland, \printead{e1}, \printead{e2}}
\address[B]{Mathematical Science Research Institute, Berkeley, USA}
\address[C]{University Paris-Dauphine, CEREMADE, Place du Mar\'echal De Lattre de Tassigny, 75016 Paris, France}
\end{aug}

\begin{abstract}

When searching for exoplanets, one wants to count how many planets orbit a given star, and to determine what their characteristics are. If the estimated planet characteristics are too far from those of a planet truly present, this should be considered as a false detection. This setting is a particular instance of a general one: aiming to retrieve parametric components in a dataset corrupted by nuisance signals, with a certain accuracy on their parameters. We exhibit a detection criterion minimizing false and missed detections, either as a function of their relative cost or when the expected number of false detections is bounded. If the components can be separated in a technical sense discussed in detail, the optimal detection criterion is a posterior probability obtained as a by-product of Bayesian evidence calculations. Optimality is guaranteed within a model, and we introduce model criticism methods to ensure that the criterion is robust to model errors. We show on two simulations emulating exoplanet searches that the optimal criterion can significantly outperform other criteria. Finally, we show that our framework offers solutions for the identification of components of mixture models and Bayesian false discovery rate control when hypotheses are not discrete.

\end{abstract}

\begin{keyword}
\kwd{Exoplanets}
\kwd{Bayesian}
\kwd{False discovery rate}
\kwd{Decision theory}
\kwd{Maximum utility}
\end{keyword}

\end{frontmatter}

\section{Introduction}
\label{sec:introduction}

Planets outside the Solar System, or exoplanets, can be detected by several observational techniques, leading to different types of data~\citep{perryman2018}. In all cases, based on these data, the goal is to determine how many planets can be confidently detected and what their characteristics are: orbital period, eccentricity, mass, radius, etc. Even if the number of planets detected matches the true number of planets, if the orbital elements of a planet whose detection is claimed are too far from those of a planet truly in the system, it might lead to incorrect scientific conclusions. We thus need to tie the detection criterion to the desired accuracy on the orbital elements.

 In the present work, we adopt a general parametric model encompassing exoplanets. The data is described by $n = 0$ to $n_{\mathrm{max}}$ parametrized signals, which we call components, as well as parametrized nuisance signals. In the case of exoplanets, the parameters of component $i$ are the orbital elements of planets $i$, and the nuisance signals parameters include instrument offsets (intercepts), linear trends due to stellar companions, parameters of the noise probability distribution, etc. Our aim is to determine how many components are present, with a desired accuracy on their parameters.
 
The way we formalize this problem can be seen as an extension to the continuous case of the discrete Bayesian false discovery rate problem of~\cite{muller2004, muller2006}. 
They consider a dataset that can potentially support $n$ out of $m$ discrete hypotheses $H_i$, $i=1..m$. Their goal is to find $n$, and a subset $S$ of $n$ indices such that the claim ``the $n$ hypotheses $H_i, i\in S$ are accepted as true'' is optimal in a certain sense. In our work, hypotheses are of the form: ``there are $n$ components, one with parameters in $\Theta_1$, one with parameters in $\Theta_2$..., one with parameters in $\Theta_n$'', where $\Theta_i$s are regions of the parameter space. If, for instance, the $\Theta_i$s are balls of fixed radius, their centre can be represented by a vector of continuous parameters, thus, informally speaking, as continuous indices.

 If among the chosen $(\Theta_i)_{i=1..n}$,  $p$ of them do not contain the parameters of a component truly present, we count $p$ false detections. Conversely, if there are $q$ components truly present whose parameters do not belong to one of the $\Theta_i$s, we count $q$ missed detections.  
 To optimize the choice of $(\Theta_i)_{i=1..n}$, as in~\cite{muller2004}, we adopt two Bayesian approaches. First, we compute the maximum utility action, where the objective function performs a trade-off between expected false and missed detections. Second, we minimise the expected number of missed detections subject to a constraint on the expected number of false ones. 
We find that provided the $\Theta_i$s cannot be too close to each other in some sense, the optimal detection procedure simply consists in taking the $\Theta_i, i=1..n$ such that for each $i$, the posterior probabilities of having a component with parameters in $\Theta_i$ is greater than a certain threshold.

 

Our analysis can be applied to any exoplanet observation technique. However, to illustrate our method, we focus on the radial velocity (RV) technique, an exoplanet detection method poised to play a crucial role in the coming decades, in particular in the search for life outside the Solar System~\citep{nasaeprv2021}. For the sake of simplicity, we consider here the last step of RV analysis: finding parametric periodic signals in an unevenly sampled time series, and we refer the reader to~\cite{haraford2023} for a more in-depth presentation of RV data analysis. For this particular step, several techniques exist: the Bayes factor \citep[e.g.][]{ford2007, gregory2007, tuomi2009, brewerdonovan2015, diaz2016, faria2016, nelson2020}, periodograms~\citep{jaynes2003, cumming2004, baluev2008, zechmeister2009, baluev2009, baluev2013, baluev2015, delisle2019a}, or sparse recovery techniques associated to a false alarm probability~\citep{hara2017}. As argued in~\cite{hara2021}, these statistical significance metrics do not explicitly encode the desired accuracy on the orbital elements. 
The optimal detection criterion does encode it, and we show on a simulation that it can significantly outperform existing criteria.

As a posterior probability, our significance metric is particularly meaningful if it is calibrated. That is, if among the detections with posterior probability $\alpha$, on average a fraction $\alpha$ of them is correct. Using the vocabulary of~\cite{box1980}, finding an optimal decision pertains to the ``estimation'' problem, that is selecting a decision among a set of alternatives. However, calibrating the significance metric pertains to ``model criticism'': determining whether this set of alternatives is reasonable. We define a general test for model criticism based on a Bayesian cross-validation. We apply it in the case of exoplanets, where such tests are rarely performed.

In Section~\ref{sec:problem}, we present our formalism. The optimal criterion is searched as a maximum utility action in Section~\ref{sec:utility_some}, and a maximum number of true detections under constraints on the number of false ones in Section~\ref{sec:constraint}. In Section~\ref{sec:discussion}, we present what the optimal procedure is, we show an example of application, and we discuss the relationship of our work with the Bayesian approaches to false discovery rates, as well as ``label-switching'' problems arising in particular in mixture models. We address the problem of model criticism and calibration in Section~\ref{sec:mopen}, and conclude in Section~\ref{sec:conclusion}.

      
   \section{Formalism}
   \label{sec:problem}
   \label{sec:fip}

Let us consider a dataset $y \in \mathcal{Y}$, potentially exhibiting an unkown number of components $n$, such that the component $i=1,..,n$ is parametrized by vector $\theta_i \in T$. The parameters of the nuisance signals are denoted by $\eta \in H$. Our results are valid if $\mathcal{Y}$, $T$ and $H$ are measurable spaces, and in most practical cases their elements are real valued vectors.

To simplify notations, we denote by $p(x)$ the distribution of the random variable $x$.
We assume that we have a likelihood function $p(y\mid (\theta_i)_{i=1...n}, \eta)$ and a proper prior distribution $p((\theta_i)_{i=1...n}, \eta)$. The number of components $n$ is a free parameter, 
we assume that $0\leqslant n \leqslant n_{\mathrm{max}}$ where $n_{\mathrm{max}}$ can be a positive integer or $+\infty$. Writing $\theta = (\theta_i)_{i=1..n}$, $\theta$ belongs to $T \amalg T^2 \amalg... \amalg T^{n_{\mathrm{max}}}$, where $\amalg$ is the disjoint union.

 

In the context of exoplanets, $n$ is the number of planets and $\theta_i$ are the parameters of planet $i$. The parameters $\eta$ is a vector that includes instrumental offsets, potentially polynomial trends, hyperparameters of a stochastic process describing the noise, etc. The likelihood describes the noise model, and the priors are either reference ones or aimed at representing the demographics of planets.


\subsection{Detections}
   \label{sec:mathprob}
  To express that we want a certain accuracy on the component parameters, we define a detection claim as follows. 
  \begin{definition}[Detection claim] We first choose $\mathcal{T}$, a set of measurable subspaces of $T \amalg T^2 \amalg... \amalg T^{n_{\mathrm{max}}}$.  Given $ (\Theta_1,..., \Theta_n) \in \mathcal{T}$, a detection claim is denoted by $a(\Theta_1,...,\Theta_n)$ and defined as the event ``There are exactly $n$ components, one with parameters in $\Theta_1$,  one with parameters in $\Theta_2$...,  one with parameters in $\Theta_n$''. 
 \label{def:detec}
 \end{definition}
 One of the possibilities, if $T$ is a metric space, is to choose $\mathcal{T}$ such that the $\Theta_i$ are closed balls of fixed radius. This conveys the idea that a certain resolution on the parameters is desired. 
 
 A detection claim is completely correct if there are $n$ components truly present in the data, whose parameters $(\theta_i)_{i=1...n}$ are such that there exists a permutation of the indices $\sigma$ with  $\theta_{\sigma(i)} \in \Theta_i$ for all $i = 1,...,n$. The permutation simply expresses that the order of the $\Theta_i$s does not matter. There can be false and missed detections, which we define as follows.
 \begin{definition}[False detections] 
If for a detection claim of $n$ components in $(\Theta_i)_{i=1..,n}$, $k$ of the $\Theta_i$s  do not contain true parameters, we count $k$ false detections.
 \label{def:falsepos}
 \end{definition}
     \begin{definition}[Missed detections] 
If there are $k$ signals truly in the data whose parameters are not contained in any of the $\Theta_i$s, we count $k$ missed detections.
 \label{def:misseddec2}
 \end{definition}
In a given detection claim, there can be simultaneously several false detections and missed detections. In Fig.~\ref{fig:schematic}, we show an example of a detection claim with one false detection and
two missed detections.
\begin{figure}
      \centering
      \includegraphics[width=\linewidth]{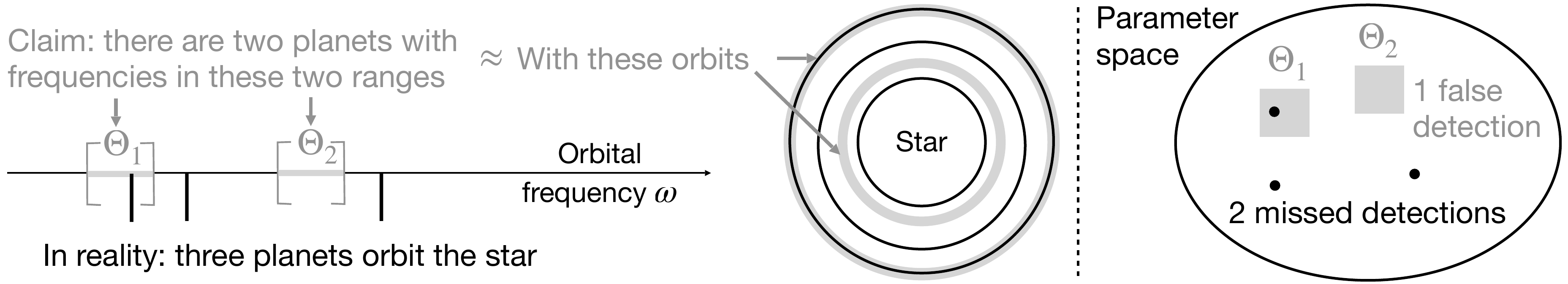}
      \caption{Illustration of a detection claim in the case where $\Theta_i$s are orbital frequency ranges. For a particular dataset of observations of a given star, the claim is that there are two planets with frequency in certain ranges (in gray), which correspond to certain orbits. In fact, there are three planets. There is one true detection, one false, and two missed. On the right, we show schematically the parameter space: we claim detections in the two gray areas. The true parameters of the three components present, represented as black dots, are not all detected. }
      \label{fig:schematic}
  \end{figure}
  
As a remark, supposing there are $m$ different types of parameters, we can define $T= T^1\amalg T^2...\amalg T^m $. The likelihood would write $p[y \mid(\theta^1_i)_{i=1,..n_1},...,(\theta^m_i)_{i=1,..n_m}, \eta ]$  where $\theta^j_i \in T^j$.  For instance, we might want to simultaneously search for planets, moons, and comets around other stars, with parameters in $T^1, T^2, T^3$, respectively.

 
\subsection{The false inclusion probability}
We will see that the choice of $\Theta_i$s minimizing the number of errors involves the posterior probability of the event ``there is one component with parameters in $\Theta_1 \subset T$'', we denote this probability by $\mathrm{TIP}_{\Theta_1}$. It can be expressed as
  \begin{align}
\mathrm{TIP}_{\Theta_1} & :=  \sum\limits_{k\geqslant 1} p(k\mid y) \mathrm{TIP}^k_{\Theta_1}  , \label{eq:pa}  
   \end{align}
where $\mathrm{TIP}^k_{\Theta_1}$ is the probability of the event ``There is one component with parameters in $\Theta_1\subset T$, knowing there are $n$ components'',
     \begin{align}
\mathrm{TIP}^k_{\Theta_1}&:= \int_{\exists i, \theta_i \in\Theta_1 } p(\theta_1,..,\theta_k, \eta \mid y,k) \dd \theta_1...\dd \theta_k \dd \eta \label{eq:ik}
 \end{align}
where $p(k \mid y)$ is the posterior probability of having $k$ components. It can be expanded as
\begin{align}
    p(k \mid y) = \frac{p(y\mid k)p(k)}{\sum_{i=0}^{n_{\mathrm{max}}} p(y\mid i)p(i)}  ,
    \label{eq:pk}
\end{align}
   where $p(k)$ is the prior probability to have $k$ components, and 
 \begin{align}
	 	\label{eq:evidence}
	 	p(y \mid  k) =\int  p( y \mid k, \theta_1,..,\theta_k, \eta)p(\theta_1,..,\theta_k, \eta) \dd \theta_1...\dd \theta_k \dd \eta 
\end{align}
is the Bayesian evidence, integrated on all possible combinations of $k$ components. We refer to $\mathrm{TIP}_{\Theta_1}$ as true inclusion probability (TIP) to have a component in $\Theta_1$.  The TIP can be seen as an extension of the posterior inclusion probability (PIP), applicable to linear models~\citep{barbieriberger2004}, to the general model considered here.

 %
 Some formulae are more conveniently written with the probability not to have a component in $\Theta_1$. Following~\cite{hara2021} we call it the false inclusion probability (FIP), defined as
\begin{align}
\mathrm{FIP}_{\Theta_1} := 1 - \mathrm{TIP}_{\Theta_1}.
\label{eq:fipdef}
 \end{align}
 
\subsection{Practical computations} The quantity $\mathrm{TIP}_{\Theta_1}$ in Eq.~\eqref{eq:pa} is decomposed in terms $p(k \mid y)$ and $\mathrm{TIP}_{\Theta_1}^k$ for $k=0...n_\mathrm{max}$. Those are  easily computed as a by-product of the output of algorithms estimating Bayesian evidences of the models with a fixed number $k$ of components (Eq.~\eqref{eq:evidence}). Below, we outline the different steps of the calculation and provide a detailed example in a Python notebook \footnote{\label{note1}Available in supplementary material~\citep{hara2023_supp_code}, and  \url{https://github.com/nathanchara/FIP}.}. 

Let us suppose that for a fixed $k$, there are $N$ posterior samples of the distribution $p(\theta_1,..,\theta_k, \eta \mid y,k)$, $\theta^i = (\theta_1^i,..,\theta_k^i)$, $i=1..N$, reliably exploring the parameter space. The quantity $ \mathrm{TIP}^k_{\Theta_1}$ in Eq.~\eqref{eq:ik} can simply be estimated as the number of indices $i$ such that for some $j =1,..,k$, $\theta_j^i \in \Theta_1$, divided by the total number of samples $N$.  Alternatively,  $\mathrm{TIP}^k_{\Theta_1}$ can be estimated with a nested sampler~\citep{skilling2006}. For a model with $k$ components, nested samplers provide a collection of $N$ samples $\theta^i = (\theta_1^i,..,\theta_k^i)$, $i=1,..N$, associated to a weight $w_i$ and a likelihood $L_i$. Denoting by $p_i = w_i L_i /\sum_j w_j L_j $, $ \mathrm{TIP}^k_{\Theta_1}$ in Eq.~\eqref{eq:ik} is simply the sum of $p_i$ taken over the indices of samples with $\theta^i $ such that $\theta^i_j \in \Theta_1$ for some $j$.  

The computations of Bayesian evidences themselves are difficult in the case of exoplanets~\citep{nelson2020}. We suggest a way to test their reliability in the context of FIP calculations in~\cite{hara2022}, implemented in the example notebook\textsuperscript{\ref{note1}}.

\subsection{Example}
\label{sec:fipperio}

To illustrate the notions given above, let us consider the radial velocity time-series of HD 21693~\citep{udry2019}, taken with the HARPS instrument and processed with the YARARA pipeline~\citep{cretignier2021}. The RV technique consists in acquiring the spectrum of a given star at different times. Thanks to the Doppler effect, by measuring the frequency shift between these spectra, astronomers can determine the variation of the radial velocity (RV) of the star: its velocity projected onto the line of sight. If an orbiting planet is present, it causes a reflex motion of its host star, and thus periodic RV variations. If several planets orbit the star, their combined effect is well approximated by the sum of their individual ones.
Denoting by $y$ the RV time series analysed, our likelihood is defined by the model of the measurement at time $t$,
\begin{eqnarray}
    y(t) &=& c_0  + \sum\limits_{j=1}^{n} f(t,\theta_i)  + \epsilon_t \label{eq:nominal_model} \\ 
     \epsilon_t &\sim & \mathcal{N}(0,\sigma_t^2 + \sigma_J^2),  \label{eq:nominal_model2}
\end{eqnarray}
where $\theta_i$ is the vector of the orbital parameters of planet $i$, $f$ is a periodic function as defined in Eq. 1 of~\cite{wrighthoward2009}, and $\sigma_t$ is the nominal error bar on the measurement at time $t$. The free parameters are $n$, $c_0,\sigma_J,(\theta_i)_{i=1..n}$.
We assume that there are at most $n_{max}=3$ planets. We fix uninformative priors, which we do not specify for the sake of brevity, as they do not matter here. The posterior distributions and Bayesian evidences are computed with the nested sampling algorithm  \textsc{POLYCHORD},~\citep{handley2015b,handley2015}.

As in~\cite{hara2021, toulisbean2021}, we consider that a planet has been correctly detected if the period of the claimed planet is close to a true period with a certain accuracy. In the formalism of section~\ref{sec:framework}, we choose the sets $\Theta_i$s as frequency intervals of fixed length, chosen from a grid.  The interval $k$ of the grid is defined as $[\omega_k - \Delta \omega/2, \omega_k + \Delta \omega/2]$, where $ \Delta \omega = 2\pi/T_\mathrm{obs}$, $T_\mathrm{obs}$ is the total observation time span, and  $ \omega_k = k\Delta \omega / N_\mathrm{oversampling}$. We take $N_\mathrm{oversampling} = 5$.  
For each interval in the grid, we compute the marginal probability to have a planet whose frequency lies in the interval, and its FIP as defined in Eq.~\eqref{eq:fipdef}\footnote{It is in fact more computationally efficient to loop over the samples and determine in which frequency interval they lie, see the Python example in supplementary material~\citep{hara2023_supp_code}, available also at \url{https://github.com/nathanchara/FIP}.}. The value of $-\log_{10}$FIP as a function of frequency is represented in Fig.~\ref{fig:fipperio}, and is called a FIP periodogram.

In the following sections, we show that the optimal detection procedure simply consists in claiming the detection of all signals with FIP lower than a certain threshold (see section~\ref{sec:procedure}). 
Here, for any reasonable FIP threshold, 0.01\%-10\%,  two planets are detected. Indeed, 
the probability that there are no planets with frequencies   $2\pi/53.8\pm \Delta \omega$ rad/day and $2\pi/ 22.67\pm \Delta \omega$ rad/day period is supported with FIP less than $10^{-6}$. The interval with the third highest probability occurs at 4118 days, it has a FIP greater than 90\%, and likely stems from low frequency noise. 



\begin{figure}
    \centering
    \includegraphics[width=13cm]{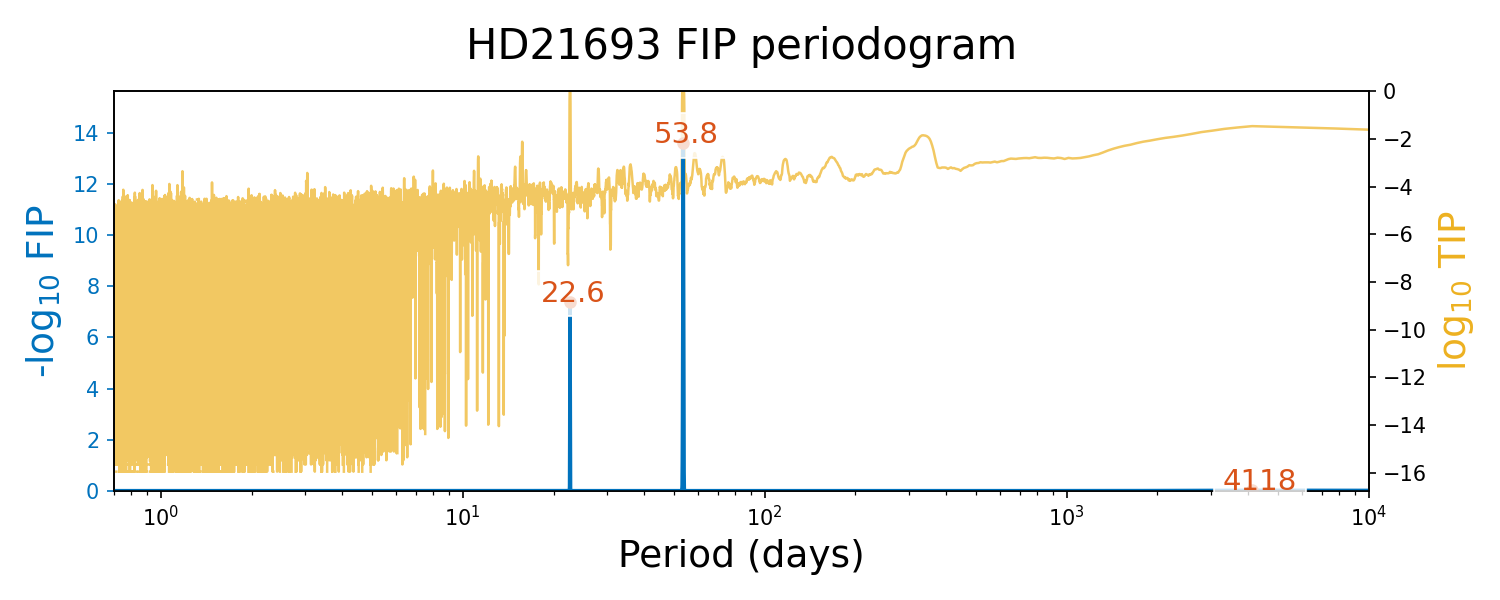}
    \caption{FIP periodogram of the radial velocities measured on HD 21693~\citep{udry2019}: we represent $-\log_{10}\mathrm{FIP}$ of having a planet with frequency in interval $[\omega_k - \Delta \omega/2, \omega_k + \Delta \omega/2]$ as a function of $\omega_k$ for a maximum of 2 planets (scale on the left $y$ axis)). The scale on the right $y$ axis represents TIP = 1- FIP. The quantities $-\log_{10}\mathrm{FIP}$ and $\log_{10}\mathrm{TIP}$ represent differently the same information: $-\log_{10}\mathrm{FIP}$ allows to distinguish between very significant and moderately significant signals, $\log_{10}\mathrm{TIP}$ allows to see hints of signal.} 
    \label{fig:fipperio}
\end{figure}


       \section{Maximum utility}
        \label{sec:utility_some}
     In the present work, we consider a dataset $y$ that can support the detection of components, described by a likelihood and a prior distribution (see Section~\ref{sec:mathprob}).  A detection claim, as in Definition~\ref{def:detec}, will be issued: we will output a number of components and locations of their parameter. We want this detection claim to be optimal in the sense that it minimizes both false and missed detections. 
     We first formalize this problem in the maximum utility framework~\citep{vonneumann1947}. 
    
    The maximum utility framework requires us to specify a utility function over a space of potential actions.  
    Here, the space of actions is 
     \begin{align*}\mathcal{A} = \{a(\Theta_1,...,\Theta_n), (\Theta_i)_{i=1...n} \in \mathcal{T} \}\end{align*}  where $a(\Theta_1,...,\Theta_n)$ is given in Definition~\ref{def:detec}. Choosing an action here is exactly equivalent to choosing where we claim the parameters of the components are: $(\Theta_i)_{i=1...n} \in \mathcal{T}$. 
     

    The utility function
    $U\{a,(\theta,\eta)\}$ is a non-negative real number
     quantifying the gain of taking action $a$ while the true value of the parameters of the components and the nuisance parameters are $\theta = (\theta_i)_{i=1,..,n}$ and $\eta$, respectively.
  For fixed parameters $(\theta,\eta)$, the higher the gain of taking action $a$ is,  the higher $U\{a,(\theta,\eta) \}$ should be.


 

    In practice, the true values of $(\theta,\eta)$ are unknown, the only information we have about it are their posterior distribution.
   The maximum utility action $a^\star$, if it exists, is defined as maximizing the expectancy of the utility taken on the posterior distribution of the pararameters $(\theta,\eta)$,
      \begin{align} \label{eq:fullprob}
        a^\star = \argmax\limits_{a \in \mathcal{A}} E_{\theta,\eta}\left[ U\left\{a, (\theta,\eta)\right\} \right]   \ 
       \end{align}
       where
       \begin{align}
    E_{\theta,\eta}\left[ U\left\{a, (\theta,\eta)\right\} \right] = \int U\left\{a, (\theta,\eta)\right\} p(\theta, \eta \mid y)\dd\theta  \dd\eta  .
       \end{align}

        \vspace{-0.3cm}
           \subsection{Defining the utility function}
   \label{sec:framework}
 We adopt the definition of false and missed detections given in Section~\ref{sec:framework}, and we assign the value 0 to the utility if all the detections are correct. Each false detection has  a cost $\alpha>0$, and a missed detection has a cost $\beta\geqslant 0$.  Denoting by $E[\mathrm{FD}]$ and $E[\mathrm{MD}]$ the expected number of false detections and missed detections when choosing $(\Theta_i)_{i=1...n}$, the expected utility could be defined as $- \alpha E[\mathrm{FD}] - \beta E[\mathrm{MD}]$. However, since we're interested in the maximum, it is unnecessary to keep both $\alpha>0$ and $\beta$ as parameters, and we define the expected utility as $- E[\mathrm{FD}] - \gamma E[\mathrm{MD}]$ where $\gamma = \beta/\alpha$.    
In Appendix~\ref{app:b}, we show that the expected utility of choosing $\Theta_1,...,\Theta_n$ is
\begin{align}
 E_{\theta,\eta}\left[ U\left\{a(\Theta_1, ..., \Theta_n), (\theta,\eta)\right\} \right]   =  -n + (1 + \gamma) \sum\limits_{j=1}^n j I_{A_j}(\Theta_1, ..., \Theta_n) -  \gamma  \sum\limits_{k=1}^{n_{max}} k p(k\mid y), \label{eq:ujplagamma_othercriterion}
\end{align}
where $p(k\mid y)$ is defined in Eq.~\eqref{eq:pk} and $I_{A_j}(\Theta_1, ..., \Theta_n)$ is the probability that exactly $j$ detections are correct when choosing $\Theta_1,...,\Theta_n$, or in other words, the probability to have exactly $j$ true detections.  

The expected utility is a function of  $(\Theta_i)_{i=1..n}$, and we want to maximize it. To do so, we first maximize it with a fixed number of components $n$, then find the optimal $n$. Maximizing the expected utility (Eq.~\eqref{eq:ujplagamma_othercriterion}) for a fixed $n$ is equivalent to finding 
   $ (\Theta_i^n)_{i=1..n} $ maximizing $\sum_{j=1}^n j I_{A_j}(\Theta_1,...,\Theta_n),
    \label{eq:max_util_n}$
 provided they exist. \textit{A priori}, for $n' \neq n$ the optimal $(\Theta_i^n)_{i=1..n}$ and $(\Theta_i^{n'})_{i=1..n'}$ do not necessarily have something in common.  As we shall see in the following, if the $\Theta_i$s are ``separable'' in some sense, then the optimal spaces $(\Theta_i^n)_{i=1..n}$ can be found iteratively as $n$ increases.
 

%
    

\subsection{Components with parameters in disjoint regions}


A first simplification occurs if we restrict the search to $\Theta_i,i=1...n$ that are pairwise disjoint. One possible justification is that the physics forbids two components to occupy the same $\Theta_i$. 
\begin{lemma}
 Let us consider $\Theta_1 \subset T,...,\Theta_n \subset T$    , $ \forall i_1, i_2 = 1...n, i_1 \neq i_2 $, $\Theta_{i_1} \cap \Theta_{i_2} = \emptyset$. Then with $\mathrm{TIP}_{\Theta_i} $ as defined in Eq.~\eqref{eq:pa}, and  $I_{A_j}(\Theta_1,...,\Theta_n) $ as used in Eq.~\eqref{eq:ujplagamma_othercriterion} 
\begin{align}
 \sum\limits_{j=1}^n j I_{A_j}(\Theta_1,...,\Theta_n)   = \sum\limits_{i=1}^n \mathrm{TIP}_{\Theta_i}
 \label{eq:aj_eq_thetai}
\end{align}
\label{lem:separ}
\end{lemma}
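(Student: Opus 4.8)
The plan is to recognize $\sum_{j=1}^n j\, I_{A_j}$ as the posterior expectation of the number of correct detections, and then to evaluate that expectation by linearity after introducing one indicator per region. Concretely, for a fixed realization of the true state — a number of components $k$ together with parameters $(\theta_1,\dots,\theta_k,\eta)$ — let $X_i=\mathbf{1}[\,\exists\, \ell\le k,\ \theta_\ell\in\Theta_i\,]$ be the indicator that region $\Theta_i$ contains at least one true parameter, i.e. that detection $i$ is correct in the sense of Definition~\ref{def:falsepos}. I would first argue that under the disjointness hypothesis the total number of correct detections is exactly $C=\sum_{i=1}^n X_i$: since the $\Theta_i$ are pairwise disjoint, each true component falls into at most one region, so no single true component can render two distinct detections correct, and conversely each correct region can be matched to a distinct true parameter it contains. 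Hence the naive count $\sum_i X_i$ coincides with the genuine number of correctly matched detections, with no double counting.

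Given this, $I_{A_j}$ is by definition the posterior probability of the event $\{C=j\}$, so $\sum_{j=1}^n j\,I_{A_j}=\sum_{j=1}^n j\,p(C=j\mid y)=E[C]$, the expectation being taken over the posterior of $(k,\theta,\eta)$. Linearity of expectation then gives $E[C]=\sum_{i=1}^n E[X_i]$, and the statement reduces to the identification $E[X_i]=I_{\Theta_i}$ for each $i$.

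To establish $E[X_i]=I_{\Theta_i}$ I would unfold the definitions in Eqs.~\eqref{eq:pa} and \eqref{eq:ik}. Conditioning on there being $k$ components, $E[X_i\mid k]=\int_{\exists \ell,\ \theta_\ell\in\Theta_i} p(\theta_1,\dots,\theta_k,\eta\mid y,k)\,\dd\theta_1\cdots\dd\theta_k\,\dd\eta=I^k_{\Theta_i}$, which is precisely Eq.~\eqref{eq:ik}. Averaging over the posterior number of components, $E[X_i]=\sum_{k\ge1}p(k\mid y)\,I^k_{\Theta_i}=I_{\Theta_i}$, matching Eq.~\eqref{eq:pa} (the $k=0$ term contributes nothing, since an empty configuration gives $X_i=0$). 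Substituting back yields $\sum_{j=1}^n j\,I_{A_j}=\sum_{i=1}^n I_{\Theta_i}$, which is Eq.~\eqref{eq:aj_eq_thetai}.

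The step I expect to be the crux is the first one: rigorously justifying that $C=\sum_i X_i$, i.e. that disjointness is exactly what removes the discrepancy between ``a region contains a true parameter'' and ``a distinct true component is matched to that region.'' Without disjointness, a single true component lying in an overlap of several $\Theta_i$ would inflate $\sum_i X_i$ above the actual number of correct detections, breaking the identity; one therefore has to tie the counting back to the matching (permutation) notion of correctness used to define the $A_j$, and verify that the two coincide precisely in the pairwise-disjoint case. Everything after that is linearity of expectation and bookkeeping against Eqs.~\eqref{eq:pa} and \eqref{eq:ik}.
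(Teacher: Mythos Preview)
Your argument is correct. The identification $C=\sum_{i=1}^n X_i$ under pairwise disjointness is the crux and you handle it properly: each true parameter lies in at most one $\Theta_i$, so the ``at least one true parameter in $\Theta_i$'' indicators can be matched to distinct components, and conversely no matching can do better. From there, linearity of expectation plus the identification $E[X_i]=I_{\Theta_i}$ via Eqs.~\eqref{eq:pa}--\eqref{eq:ik} closes the argument cleanly.

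The paper's own proof proceeds differently in presentation: it partitions the parameter space according to the exact subset $S\subset\{1,\dots,n\}$ of regions that contain a true component, writes each $I_{\Theta_i}$ and each $I_{A_j}$ as a sum of posterior masses over these pieces, and then counts multiplicities (the piece corresponding to a subset of size $j$ appears once in $I_{A_j}$ and $j$ times in $\sum_i I_{\Theta_i}$). Your indicator/expectation argument is really the same counting, but packaged probabilistically: writing $C=\sum_i X_i$ and taking expectations replaces the explicit enumeration of subsets by linearity. What your route buys is brevity and a transparent explanation of why disjointness is needed (it is exactly the condition that makes the naive sum of indicators equal to the size of the optimal matching); what the paper's route buys is a fully explicit decomposition that makes visible the common refinement underlying both sides, which can be useful if one later needs to analyse the non-disjoint case term by term.
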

The proof, given in Appendix~\ref{app:c}, simply uses a decomposition of the terms of the left-hand sum. 
The quantity $\mathrm{TIP}_{\Theta_i}$ can be computed as explained in Section~\ref{sec:fip}. 
If Eq.~\eqref{eq:aj_eq_thetai} is verified,  adopting the definition of FIP of Eq.~\eqref{eq:fipdef}, 
maximising the expected utility, as defined in Eq.~\eqref{eq:ujplagamma_othercriterion}, for a fixed number of components $n$ comes down to solving 
\begin{align}
(\Theta_i^n)_{i=1..n} = \argmin\limits_{\Theta_1,... \Theta_n ,  \Theta_i \cap \Theta_j = \emptyset} 
\sum\limits_{i=1}^n    \mathrm{FIP}_{\Theta_i} = \argmax\limits_{\Theta_1,...,\Theta_n ,  \Theta_i \cap \Theta_j = \emptyset}  \sum\limits_{i=1}^n    \mathrm{TIP}_{\Theta_i} .
\label{eq:maxn}
\tag{$P_n$}
\end{align}
We just want to find $(\Theta_i^n)_{i=1..n}$ maximizing the probability that they ``host'' a component.
Note that the equality of Eq.~\eqref{eq:aj_eq_thetai} is not true if the $\Theta_i$s are allowed to have non empty intersection. This is easily seen for $n=2$, where  $I_{A_1}^1 = I_{\Theta_1 \cup \Theta_2}^1$ which might not be equal to $ \mathrm{TIP}_{\Theta_1}^1 +  \mathrm{TIP}_{\Theta_2}^1$ if $\Theta_1$ and $\Theta_2$ are not disjoint. 
In Appendix~\ref{app:d} we exhibit some conditions guaranteeing the existence of the solution to~\eqref{eq:maxn}. In the next section, we try to find it. 


%


\subsection{Searching for the optimum}
\label{sec:searching}

To solve $\eqref{eq:maxn}$ for $n=1$, we simply need to find the region of the parameter space ${\Theta^1_1}$ with maximum $\mathrm{TIP}_{\Theta^1_1}$. It is tempting to build the solution for higher $n$ from there. Supposing that~\eqref{eq:maxn} has a solution $(\Theta_i^n)_{i=1...n}$, a natural candidate solution for $(P_{n+1})$ is to append the space with maximum true inclusion probability outside the $\Theta_i^n$s, which we denote by
\begin{align}
    \Theta_{n+1}^\star = \argmax\limits_{\Theta_{n+1} \in T \backslash \cup_{i=1}^n \Theta_i^n } \mathrm{TIP}_{\Theta_{n+1}} = \argmin\limits_{\Theta_{n+1} \in T \backslash \cup_{i=1}^n \Theta_i^n } \mathrm{FIP}_{\Theta_{n+1}}.
    \label{eq:thetastar}
\end{align}
The following result, proven in Appendix~\ref{app:e},  shows that this simple procedure is not always optimal. As expressed in the following lemma, the solution to  $(P_{n+1})$ either consists in appending $\Theta_{n+1}^\star$, or in arranging $n+1$ regions all intersecting at least one of the $(\Theta_{i}^n)_{i=1,..,n}$.
\begin{lemma}
Suppose that~\eqref{eq:maxn} has a solution $(\Theta_1^n,...,\Theta_n^n) \in \mathcal{T}$. 
Then the solution to $(P_{n+1})$ is either $(\Theta_1^n,..., \Theta_n^n, \Theta_{n+1}^\star )$ or such that $\forall i = 1...n+1$, $\exists j =  1...n$, $\Theta_{i}^{n+1} \cap  \Theta_{j}^{n} \neq \emptyset$.
\label{lem:find}
\end{lemma}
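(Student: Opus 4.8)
The plan is to prove the dichotomy by comparing the optimal value of $(P_{n+1})$ with that of the explicit ``appended'' candidate, using only that the objective in \eqref{eq:maxn} is the plain sum $\sum_i I_{\Theta_i}$ over pairwise disjoint regions (Lemma~\ref{lem:separ}). Write $V_n := \sum_{i=1}^n I_{\Theta_i^n}$ for the optimal value of $(P_n)$ and let $(\Theta_i^{n+1})_{i=1}^{n+1}$ be any solution of $(P_{n+1})$, with value $V_{n+1} := \sum_{i=1}^{n+1} I_{\Theta_i^{n+1}}$. First I would record one easy inequality: since the $\Theta_i^n$ are pairwise disjoint and $\Theta_{n+1}^\star \subseteq T\setminus\cup_{i=1}^n\Theta_i^n$ by \eqref{eq:thetastar}, the tuple $(\Theta_1^n,\dots,\Theta_n^n,\Theta_{n+1}^\star)$ is feasible for $(P_{n+1})$, whence $V_{n+1}\geq V_n + I_{\Theta_{n+1}^\star}$.

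For the dichotomy, I would assume the second alternative fails, i.e.\ there is an index $i_0$ with $\Theta_{i_0}^{n+1}\cap\Theta_j^n=\emptyset$ for every $j=1,\dots,n$, and relabel so that $i_0=n+1$. The idea is then to show this forces equality above and that the appended tuple attains the optimum. Splitting the objective gives $V_{n+1}=\sum_{i=1}^n I_{\Theta_i^{n+1}} + I_{\Theta_{n+1}^{n+1}}$. The first $n$ regions are pairwise disjoint, hence a feasible point of $(P_n)$, so $\sum_{i=1}^n I_{\Theta_i^{n+1}}\leq V_n$. The last region satisfies $\Theta_{n+1}^{n+1}\subseteq T\setminus\cup_{j=1}^n\Theta_j^n$ by our assumption, so it competes in the maximisation \eqref{eq:thetastar} and $I_{\Theta_{n+1}^{n+1}}\leq I_{\Theta_{n+1}^\star}$. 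Adding the two bounds yields $V_{n+1}\leq V_n+I_{\Theta_{n+1}^\star}$.

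Combining with the reverse inequality from the first step forces $V_{n+1}=V_n+I_{\Theta_{n+1}^\star}$, so the feasible appended tuple $(\Theta_1^n,\dots,\Theta_n^n,\Theta_{n+1}^\star)$ achieves the optimum and is therefore a solution of $(P_{n+1})$, which is exactly the first alternative. Thus whenever the entanglement condition fails, an optimal configuration of the appended form exists, proving the claim.

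I would flag two points that need care rather than a deep obstacle. The argument only produces an appended optimal solution; because $(P_n)$ and $(P_{n+1})$ need not have unique solutions, the statement is best read as ``there is an optimal solution of the appended form, or every optimal solution is entangled,'' and I would phrase it accordingly. Second, the whole comparison presupposes that the relevant maxima are attained (a solution of $(P_n)$, the maximiser $\Theta_{n+1}^\star$ of \eqref{eq:thetastar}, and a solution of $(P_{n+1})$); I would lean on the existence conditions established in Appendix~\ref{app:existence} for this, which is the only genuinely nontrivial ingredient.
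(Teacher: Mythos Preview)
Your proof is correct and follows essentially the same strategy as the paper's: both isolate one region $\Theta_{i_0}^{n+1}$ disjoint from all the $\Theta_j^n$, use that the remaining $n$ disjoint regions are feasible for $(P_n)$, and then bound the detached region by $I_{\Theta_{n+1}^\star}$. Your presentation via a direct two-sided value comparison $V_{n+1}\lessgtr V_n+I_{\Theta_{n+1}^\star}$ is slightly cleaner than the paper's phrasing through an abstract argmax-preservation property, and your caveat about non-uniqueness (``there exists an optimal appended solution'') is exactly the right reading of the statement.
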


In the example of Section~\ref{sec:fipperio}, we wanted to determine the presence of a planet in a frequency interval of size $\Delta \omega =2\pi/T_{\mathrm{obs}}$, the timespan of the observations $T_{\mathrm{obs}}$ dictates the frequency resolution. Had we chosen a much smaller $\Delta \omega$, when searching for the solution to~\eqref{eq:maxn}, as $n$ increases, the new interval minimizing~\eqref{eq:maxn} would be ``glued'' together with the $n-1$ present to form a wider interval, until enough probability mass around the TIP modes is encompassed. In this example and elsewhere, such iterative ``gluing'' essentially means that the $\Theta_{i}$s do not capture the resolution provided by the likelihood, they are too small. To circumvent this problem, the size of $\Theta_i$s can be defined as the typical size of posterior modes.

Lemma~\ref{lem:find} motivates the definition of separability, simply expressing the condition under which it can be ensured that the solution to $(P_{n+1})$ is to append $\Theta_{n+1}^\star$ (Eq.~\eqref{eq:thetastar}).
\begin{definition}[Separability] We say that a dataset $y$, with likelihood $p(y\mid \theta,\eta)$ and prior $p(\theta,\eta)$ verifies component separability of order $n \geqslant 2$ if (i) the solution to $(P_{n-1})$, $\Theta_1^{n-1},...,\Theta_{n-1}^{n-1}$ are pairwise disjoint and $(ii)$ the solution to~\eqref{eq:maxn} is $(\Theta_1^{n-1},..., \Theta_n^{n-1}, \Theta_{n}^\star )$ as defined in Eq.~\eqref{eq:thetastar}. 
\label{def:separability}
\end{definition}
If $T$ is a metric space, separability is ensured as long as in the vicinity of the $\Theta_i$s, the probability of having a signal is sufficiently low. We denote by $B(\theta,R)$ a closed ball of centre $\theta$ and radius $R$. In Appendix~\ref{app:e} we prove the following result.
\begin{lemma}
 Let us suppose that $\mathcal{T}$ is the set of 0 to $n_{\mathrm{max}}$ disjoint balls of radius $R$ and centres $(\theta_i)_{i=1...n}$. Denoting by  $\Theta^c = \cup_{i=1}^n B(\theta_i, 3R) \backslash  \cup_{i=1}^n \Theta_i$, if $\mathrm{TIP}_{\Theta^c} < \mathrm{TIP}_{\Theta^\star_{n+1}} $, then the components are separable of order $n+1$.
 \label{lem:balls}
\end{lemma}

   \section{Minimum missed detections under constraints}
   
\label{sec:constraint}
 
In Section~\ref{sec:utility_some}, we minimized the number of missed and false detections with a cost function. Alternatively, we can consider the following problem: among the possible choices of $n, (\Theta_1,..\Theta_n)$ such that the expected number of false detections is lower than a certain $x \in [0, n_{\mathrm{max}}]$, which choice minimizes the number of missed detections? 
This is equivalent to maximizing the number of true detections subject to a constraint on the expected number of false ones. In Appendix~\ref{app:b}, we show that the constrained problem is
\begin{align}
\argmax\limits_{(n,\Theta_1,..\Theta_n)}  \sum\limits_{j=1}^n j I_{A_j}(\Theta_1,...,\Theta_n) 
 \text{  subject to }  n- \sum\limits_{j=1}^n j I_{A_j}(\Theta_1,...,\Theta_n)  \leqslant x.
 \label{eq:constraints_othercriterion}
\end{align} 
Maximizing \eqref{eq:ujplagamma_othercriterion} can be seen as a ``Lagrange multiplier'' version of Eq.~\eqref{eq:constraints_othercriterion}.
A natural question is whether the solutions of~\eqref{eq:constraints_othercriterion} obtained as $x$ skims $[0,+\infty)$ also maximize Eq~\eqref{eq:ujplagamma_othercriterion}, when $\gamma $ skims $[0,+\infty)$. The following theorem, proven in Appendix~\ref{app:f} shows that it is true if we have component separability as in Definition~\ref{def:separability}. 
\begin{theorem}
Let us consider a dataset $y$ and suppose that it verifies component separability at all orders, then there exists an increasing function $\gamma(x) \geqslant 0$  such that the solution of Eq.~\eqref{eq:constraints_othercriterion} and the solution maximizing the utility in Eq.~\eqref{eq:ujplagamma_othercriterion} are identical. 
\label{theorem}
\end{theorem}




\section{Discussion}
\label{sec:discussion}
\vspace{-0.3cm}

Our initial aim was to find an optimal procedure to determine which parametric components are in the data, in the framework described in section~\ref{sec:problem}. 
 We now discuss the final optimal procedure, when it is applicable, how it performs compared to other criteria and how it relates to other approaches. 
        
\subsection{Procedure}
\label{sec:procedure}
We considered two formulations of optimality: the maximum utility decision, and minimum expected number of missed detections under a constraint on the expected number of false detections.
 We have seen in section~\ref{sec:constraint} that the two are equivalent, provided a technical condition is satisfied (see Definition~\ref{def:separability}). 
If it is, 
 suppose we have found the solution $(\Theta_i^n)_{i=1,..,n}$ solving~\eqref{eq:maxn}, that is maximizing the utility with $n$ components. The maximum utility with $n+1$ components is obtained by appending $\Theta_{n+1}^{\star}$, the space with minimum FIP outside the $(\Theta_i^n)_{i=1,..,n}$, as defined in Eq.~\eqref{eq:thetastar}. 
The detection of component $n+1$ is claimed if the expected utility (Eq.~\eqref{eq:ujplagamma_othercriterion}) with $n+1$ component is greater than the one with $n$ components, 
\begin{align}
 \mathrm{TIP}_{\Theta_{n+1}^{\star}} \geqslant \frac{1}{\gamma+1}   \iff  \;\mathrm{FIP}_{\Theta_{n+1}^{\star}} \leqslant \frac{\gamma}{\gamma+1},  
    \label{eq:criterionFIP_fixed}
\end{align}
which expresses the intuitive idea of a bet ``$\gamma$ to one''. If the cost of missing a detection is  $\gamma$  when the cost of a false one is 1, then the FIP of the new signal should be less than $\gamma/(\gamma+1)$. Typically, we would choose $\gamma$ small to penalize false detections more than missed ones. This procedure is consistent with a ``Dutch book'' principle~\citep{ramsey1926, definetti1937}, the best significance metric is the probability that there is a component with certain parameters.


In the maximum utility case, one must know the maximum number of components possible $n_\mathrm{max}$, and be able to compute the Bayesian evidence up to that number, which might be computationally heavy or even intractable. To avoid computing Bayesian evidences for high dimensional models, we suggest to compute the FIPs each time $n_\mathrm{max}$ is incremented, until the FIPs of the candidate signals do not vary. This procedure is described in detail in the case of exoplanets in~\cite{hara2021}\footnote{See also the Python notebook example in supplementary material~\citep{hara2023_supp_code}, also available at \url{https://github.com/nathanchara/FIP}.}. In~\cite{hara2021}, the definition of false negatives is slightly different than the one adopted here. In Appendix~\ref{app:g}, we show that with that definition, the optimal criterion is
\begin{align}
    \mathrm{FIP}_{\Theta_{n+1}^{\star}} \leqslant \gamma p(k\geqslant n+1\mid y) .
    \label{eq:criterionFIP_body}
\end{align} 
where $ p(k\geqslant n+1\mid y) $ is the probability that there are $n+1$ components or more. 

\subsection{Applicability: general case}

Our formalism requires to choose the space $\mathcal{T}$, that is the possible choices of $\Theta_i$s (see definition~\ref{def:detec}). If $T$ is a metric space, we recommend to choose $\Theta_i$s as balls of radius $R$. 
If the prior probability to have components with parameters $\theta_1$ and $\theta_2$ with $| \theta_1- \theta_2| <2R$ is vanishing, $\Theta_i$s can be assumed pairwise disjoint, and it is sufficient to solve~\eqref{eq:maxn}.

Another important question is whether the separability condition~\ref{def:separability} applies, in which case one can simply follow the procedure of Section~\ref{sec:procedure}. 
Thanks to Lemma~\ref{lem:balls}, if the $\Theta_i$s are chosen as balls of fixed radius it is easy to verify that the signal is indeed separable at order $n+1$, and ensure that appending $\Theta_{n+1}^\star$ to the solution of~\eqref{eq:maxn} solves $(P_{n+1})$.
In practice, the value of $R$ can always be chosen small enough to verify the separability condition, unless there can be components with exactly the same parameters. If such a situation happens, it likely means that the $\Theta_i$ should be redefined. For instance, parametric descriptions of components typically include an amplitude, and the superposition of two identical components has simply twice the original amplitude. If simplifications are not possible, one can maximise the expected utility in the general case (see Eq.~\eqref{eq:ujplagamma_othercriterion}). 
 
In our framework, the candidate $\Theta_i$ spaces in $\mathcal{T}$  might have different sizes. It might be useful, for instance, to adopt $\Theta_i$s which adapt to the size of posterior local maxima. If the size of $\Theta_i$s is allowed to vary, the maximisation might favour larger $\Theta_i$ spaces. To prevent this situation, one can add a penalisation term to Eq.~\eqref{eq:ujplagamma_othercriterion} for the size of the $\Theta_i$, for instance, by choosing a certain measure $\mu$ on $T$ and adding a term $-\sum_i \mu(\Theta_i)$ to  Eqs.~\eqref{eq:ujplagamma_othercriterion}. We leave this latter approach for future work.

\subsection{Application to exoplanets}

In the case considered in Section~\ref{sec:example}, we want to localise the period of exoplanets. Lemma~\ref{lem:balls} means that if the grid of intervals has width $\Delta \omega = 2R$, and the prior excludes two planets being closer than $2\Delta \omega$, the separability condition is verified. This is true in general, because planets with very close orbits are generally dynamically unstable: there would be a collision, or one of the planets would be ejected in short time relative to the age of the system.

There are two cases where separability is a concern, but with little practical impact. The first case is the potential presence of co-orbital planets, sharing the same orbit \citep[e.g][]{gascheau1843}. To address that case, it is possible to either adopt the general formalism of section~\ref{sec:framework}, or to further specify the $\Theta_i$ spaces to ranges of orbital frequencies and phase, such that two planets with parameters in the same $\Theta_i$ would be unstable, thus excluded by the prior. 
So far, no co-orbital planets have been detected, and it has been shown that they would have a specific signature in the RV~\citep{leleu2015}.   

Second, as $\omega$ decreases, $\omega$ and $\omega\pm2\Delta \omega$ correspond to increasingly different periods.  
 Following~\cite{deck2013} Eq. 49, two planets might be unstable if they are not in coorbital resonance and their period ratio is less than $1+2.2\epsilon^{2/7}$ where $\epsilon$ is the ratio of the sum of planetary masses to the stellar mass. Defining $T_{\mathrm{obs}}$ as the observation timespan, and $\Delta\omega =2\pi/T_{\mathrm{obs}}$, in principle there can be two Neptune-mass planets with frequencies less than $2\Delta \omega$ away from one another if $\omega\leqslant 15 \Delta\omega$. In principle, separability might be a concern in the first 15 peaks of the periodogram.

The fact that two planets with frequencies close enough appear as a single planetary signal concerns not only the FIP, but all exoplanet detection methods. It is due to the intrinsic frequency resolution of a dataset, set by the sampling and in particular its baseline.  This is a well known fact, and any detection of long period planets is subject to caution. 
 In the next section, we apply the FIP to 1000 datasets simulated without restrictions on the proximity of two periods, such that the separability condition is not always verified. The optimal procedure still outperforms other methods. 
 
In conclusion, the separability condition is not a concern for the application of exoplanets, but as for any other detection metric, at long periods the frequency resolution of a dataset limits the ability to ascertain that a given signal is due to a single planet.




\subsection{Example}
\label{sec:example}

To show how the selection criterion~\eqref{eq:criterionFIP_fixed} performs compared to other significance metrics, we apply it to the detection of sinusoidal signals in unevenly sampled data, where a certain precision on the frequency is desired. We use the simulations of~\cite{hara2021} that emulate a search for exoplanets with radial velocity. The data are thus in velocity units (m/s).  

We simulate 1000 time-series with 80 time stamps $t$, taken as those of first 80 HARPS measurements of HD 69830~\citep{lovis2006}, which are representative of a typical radial velocity sampling. We inject a signal of the form $C + \sum_{i=1}^k A_k\cos2\pi t/P_k + B_k\cos2\pi t/P_k$, with $k=0$ (no signal), $k=1$, or $k=2$. 
The values of elements $k,A,B,C,P$ are sampled from distributions shown in Table~\ref{tab:priorscirc}. We add a Gaussian white noise according to the nominal error bars, which are typically  0.54 $\pm$ 0.24 m/s.
We then generated another set of 1000 systems with a lower S/N. The simulation is made with identical parameters except that an extra correlated Gaussian noise is added. This one has an exponential kernel with a 1 m/s amplitude and a timescale $\tau = 4$ days. We will refer to these two simulations as the high and low signal-to-noise ratio (SNR) simulations. 

 \begin{table}
 \centering
        \caption{Priors used to generate and analyse the 1000 systems with circular orbits.}
        \begin{tabular}{p{1.8cm} p{4.6cm} p{4cm}}
        Parameter & Prior & Values \\ \hline 
        k & Uniform $[k_\mathrm{min}, k_\mathrm{max}]$ & $k_\mathrm{min} = 0 $, $k_\mathrm{max} = 2$  \\
        A & $\mathcal{N}(0,\sigma_A^2)$ & $\sigma_A = 1.5$ m/s\\
        B & $\mathcal{N}(0,\sigma_B^2)$ & $\sigma_B = 1.5$ m/s\\  
        C & $\mathcal{N}(0,\sigma_C^2)$ & $\sigma_C = 1$ m/s\\            
        P & $\log$-uniform on $[P_\mathrm{min}, P_\mathrm{max}]$ & $P_\mathrm{min} = 1.5 $, $P_\mathrm{max} = 100$
        \end{tabular}
        \label{tab:priorscirc}
 \end{table}

Our goal is to evaluate different detection methods, provided the model used is correct. As a consequence, we analyse the data with the priors and likelihoods used to generate it. The procedures corresponding to Eqs.~\eqref{eq:criterionFIP_fixed} and~\eqref{eq:criterionFIP_body}  are labelled FIP and Max. utility respectively.
We also test the best standard practices of the exoplanet communities. These procedures all proceed in two steps: finding candidate periods and assessing their statistical significance. To find the period, we use either a periodogram as defined in~\citep{delisle2019a}, a FIP periodogram as described in Section~\ref{sec:fipperio}, using all the perior posterior, or a $\ell_1$ periodogram, aiming at retrieving a sparse Fourier spectrum~\citep{hara2017}. The significance is either assessed with a false alarm probability~\citep{delisle2019a}, a Bayes factor defined as $p(y \mid k+1)/p(y \mid k)$ with $p(y \mid k)$ and false alarm probability (FAP) as defined in  Eq.~\eqref{eq:evidence})~\citep{gregory2007}, or taking the number of planets $k$ maximising the posterior number of planets (PNP), $p(k \mid y)$.  The detail of the different methods are given in Appendix~\ref{app:h}.

 %

  To evaluate the performance of the different analysis methods, if the frequency of a detected signal is more than $2\pi /T_\mathrm{obs}$  away from the frequency of an injected signal, it is considered as a false detection. If we miss $n$ planets, we count $n$ missed detections. In other words, we adopt the definition of false and missed detection of Definitions~\ref{def:falsepos} and~\ref{def:misseddec2}, respectively.
 In Fig.~\ref{fig:mistakes_level_white} (a) and (b), we show the  total number of mistakes for the high and low SNR simulations, respectively, made with the criterion~\eqref{eq:criterionFIP_fixed} as a function of the FIP threshold. It appears that the minimum number of mistakes is attained when selecting signals with FIP $\leqslant$ 0.5, when it is more likely than not to have a planet.

  \begin{figure}
\includegraphics[width=\linewidth]{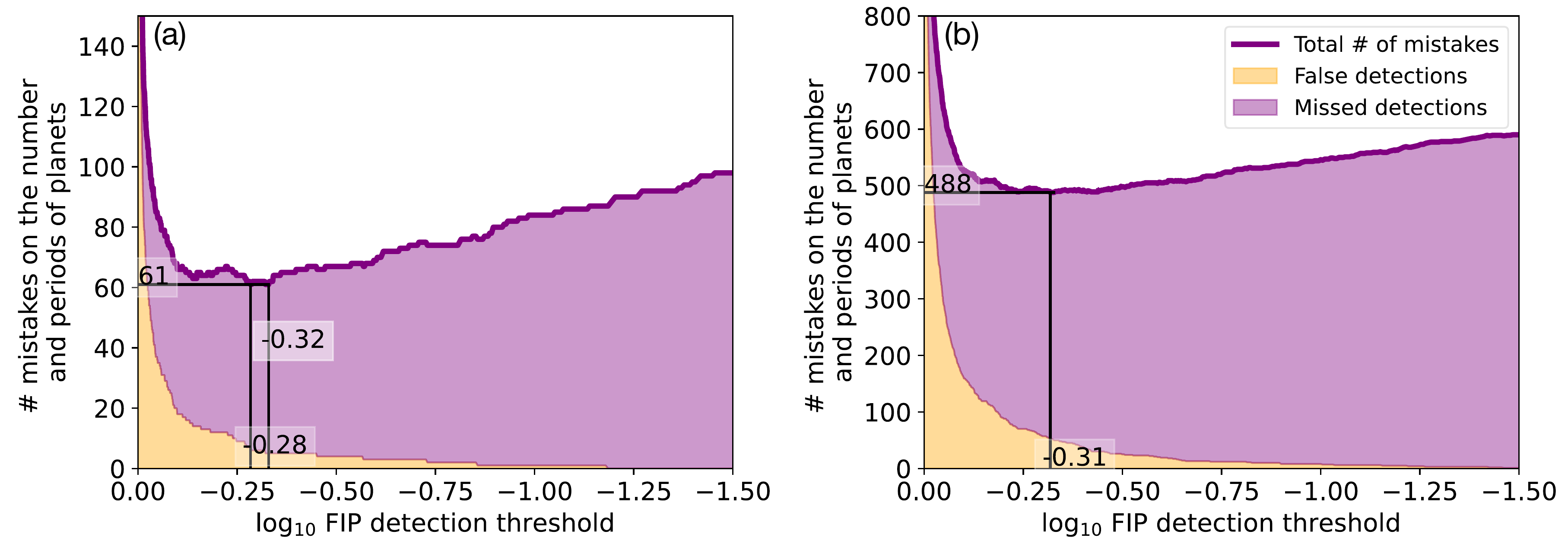}
\caption{Missed + false detections, as a function of the FIP detection threshold as defined in~\eqref{eq:fipdef}. (a) is obtained on the high SNR simulation and (b) the low SNR simulation. In (a) the minimum of mistakes is 61, attained at $\log_{10} \mathrm{FIP} \in [-0.32, -0.28]$ or equivalently $\mathrm{FIP} \in [0.47, 0.52] $. In (b) the minimum of mistakes is 488, attained at $\log_{10} \mathrm{FIP} = -0.31$, or $ \mathrm{FIP} = 0.49$.  }
\label{fig1}
\label{fig:mistakes_level_white}
\end{figure}
  \begin{figure}
\includegraphics[width=\linewidth]{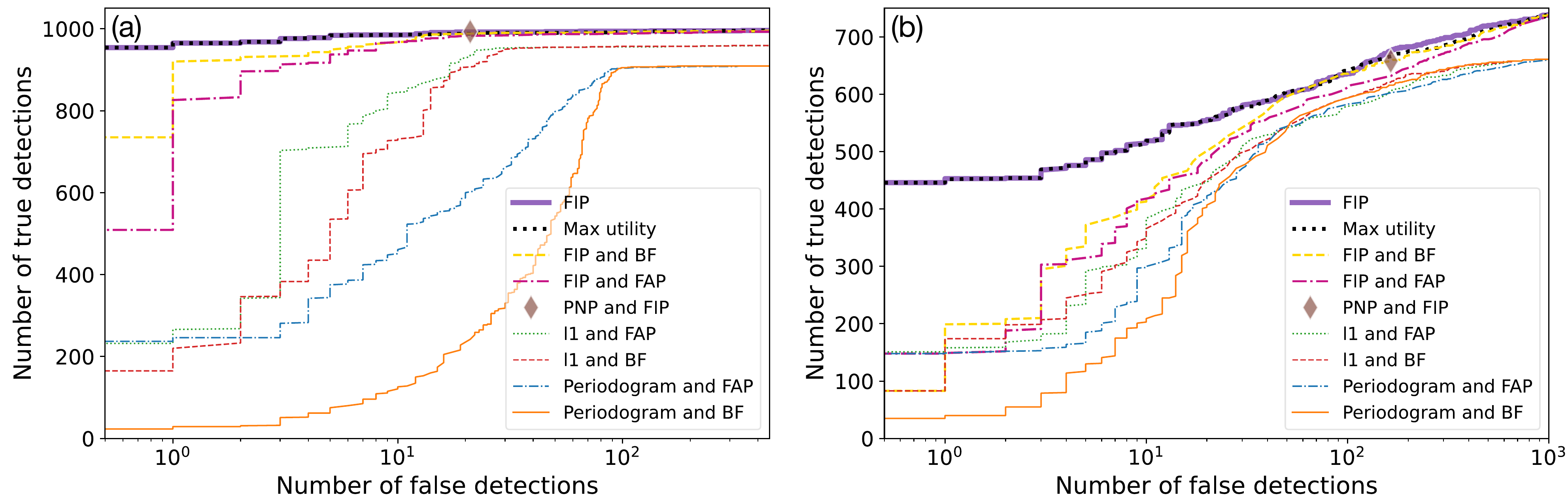}
\caption{True detections as a function of the number of false detections for the different analysis methods as their detection threshold vary, similarly to a ROC curve. (a) and (b) are obtained on the high and low SNR simulations, respectively.   The labels of the curves correspond to the methods used to find the candidate periods and to assess their significance. For instance, Periodogram and Bayes factor means that the candidate period is selected with a periodogram, and its significance assessed with a Bayes factor.  }
\label{fig:mistakes_roc_red}
\end{figure}

  To compare the different detection methods, we compute a curve similar to a Receiving operator characteristic. For each detection method, we vary the detection threshold, which draws the number of true detections as a function of the number of false ones. The optimization problem~\eqref{eq:constraints_othercriterion} explicitly seeks the detection criterion maximizing this curve. 
  In Fig.~\ref{fig:mistakes_roc_red} (a) and (b) we show the curves obtained for the high and low SNR simulations, respectively.
  As expected, the FIP and maximum utility criterion outperform other metrics, notably below $\sim$10 false positives, where it leads to a significantly higher number of true detections. 

We investigated the reasons why the FIP yields more true detections. 
 In the low signal to noise ratio, we compared the false detections obtained when the FIP or FIP periodogram + Bayes factor have each exactly 500 true detections. The corresponding threshold is FIP = 0.1 and Bayes factor = $10^{3.5}$, and there are respectively 7 and 17 false detections. The reason is that the Bayes factor validates the appropriate number of planets, but at the wrong periods. Due to the sampling, we expect that if there is a posterior mode of period at frequency $\omega$, there should be also at  $\omega \pm \omega_i$ where $\omega_i$s are local modes of the spectral window -- the Fourier transform of $\sum_i \delta(t-t_i)$ where $t_i$ are observation times. For evenly sampled data, besides $\omega_0 = 0$, there is a peak at $\omega_1$ equal to the Nyquist frequency. Here, the highest maxima outside frequency 0 are at $\omega_1 =$ 1/0.997 day\textsuperscript{-1} and $\omega_2 =$ 1/31 day\textsuperscript{-1}, and  We find in 13 out of 17 Bayes factor false detections occur at $\omega \pm \omega_1$ or $\omega_1$.
 Because it is clear that there is a planet, the Bayes factor detection of aliases might be very significant, and the threshold has to be elevated very significantly to make these detections non significant, causing to loose viable detections of low amplitude signals. 
 On the contrary, false detections by the FIP are aliases of true periods in 2 cases out of 7, because it is able to determine when there is a  degeneracy between periods. As expected, this problem worsens as the signal-to-noise ratio decreases (see Fig.~\ref{fig:mistakes_roc_red} (a) and (b) near 1 false detection).

  As mentioned in section~\ref{sec:introduction}, our original concern was that existing methods de-couple the search for the period of the planet and its significance.  
  Our results confirm that, indeed, the scale of the significance metric has to be tied to the accuracy desired on the parameters, here on the period. In~\cite{hara2021}, we show that this is also the case in practice, through the analysis of the HARPS RV dataset of the star HD 10180~\citep{lovis2011}.






\subsection{Mixture models and label-switching}

In this work, we consider likelihood functions of the form $p(y \mid (\theta_i)_{i=1..n}, \eta)$ where $\theta_i$s are all in the same space. This form applies in particular mixture models, the likelihood is assumed to be a weighted sum of distributions
\begin{align}
  p(y \mid (\theta_i)_{i=1..n}, \eta) = \sum\limits_{i=1}^n \theta_i^{1}   p_0(y \mid \theta_i^{2:}, \eta)
\end{align}
where $\theta_i $ is a vector written $ (\theta_i^1, \theta_i^{2:})$, and $p_0$ is a likelihood function of a given form. Typically $p_0$ is Gaussian, $\theta_i$s are three dimensional vectors, $\theta_i^2$ and $\theta_i^3$ are the mean and variance. 
 In mixture models, the likelihood is invariant by permutation of the labels of the $\theta_i$. For any permutation $\nu$, $p(y \mid (\theta_i)_{i=1..n}, \eta) = p(y \mid (\theta_{\nu(i)})_{i=1..n}, \eta)$. If the prior does not distinguish between the $\theta_i$s, then the posterior is also invariant by label-switching, which might cause difficulties in its interpretation \citep[e. g.][]{redner1984, titterington1985, celeux, stephens2002, celeux2018}, in particular to infer the number of components in the mixture~\citep{stephens2002,roodaki2014}. 

For that latter problem, if we want not only to estimate the number of components but a certain accuracy is desired on their parameters, then the theory presented above is readily applicable.  One simply has to define the detection of mixture components as in Definition~\ref{def:detec}, and following the analysis of Section~\ref{sec:procedure} is sufficient. As in~\cite{stephens2002}, our solution to label-switching degeneracies relies on a decision-theoretic framework: label switching does not pose a major problem conceptually as long as it is clear what is to be inferred from the posterior. 
 
As a remark, in the case of exoplanets it is  possible to break the label-switching degeneracy by imposing an order on the periods, which also has the advantage of speeding up computations~\citep{nelson2020}. However, ordering might be insufficient to interpret certain posterior distributions in mixture models \citep[e.g.][]{stephens2002}.


\subsection{Bayesian false discovery rate }
\label{sec:fdr}

In Section~\ref{sec:framework} and ~\ref{sec:constraint}, we respectively pose the problem in terms of maximisation of utility function and miminimum missed detections under constraint on the expected number of false one.~\cite{muller2004} uses the same definition of functions to optimize  in a different context: gene expression microarrays. They consider $n$ different genes which might have an impact or not not the result of some experiment, and want to decide for gene $i$ if it has a significant impact or not. In their parametrization, a variable $z_i$ plays the role of the ground truth, $z_i=0$ means the gene has no impact and $z_i=1$, that it has. The optimal decision rule consists in selecting the the $n^\star$ genes with posterior probability of $z_i=1$ greater than a certain threshold. They use this result to determine the appropriate sample size. Other works also consider discrete hypotheses, such as ~\citep{guindani2009}.
~\cite{barbieriberger2004} consider the problem where the data $y$ are described by a linear model $y = Ax + \epsilon$, where each of the components of vector of size $n$, $x$, might be zero. They aim at finding the way to select the non zero components to have optimal predictive properties, and find that they must select the components with the probability to be non-zero greater than 1/2.

 For our problem, we have to deal with a continuum of possibilities. If we choose the $\Theta_i$ as balls of fixed radius and centre $\theta$, our hypotheses are indexed by the continuous variable $\theta$. 
 We can however adapt our framework to deal with the cases of~\cite{muller2004} by selecting $T$ (see Section~\ref{sec:mathprob}) as a discrete set of indices $1..n$, and the case of~\cite{barbieriberger2004} as the set of $(i,\theta_i)$, where $\theta_i$ is a linear coefficient, $i=1,..,n$. 
 We must further impose that the prior forbids to choose twice the same index, similarly to our separability condition (Definition~\ref{def:separability}).
~\cite{muller2004} show that the optimal detection criterion consists in selecting the $n^\star$ genes with a probability $P(z_i=1 \mid y) > c/(c+1)$ where $c$ controls the relative cost of false positives and false negatives (in our formalism, $\gamma =1/c$) which is exactly the same as Eq.~\eqref{eq:criterionFIP_fixed}.

In~\cite{muller2006}, it is shown that the decision rule of~\cite{muller2004} based on a decision-theoretic approach has close links with Bayesian approaches to false discovery rate (FDR)~\citep{efron2001} (see also~\cite{storey2003, scott2003, bogdan2008, stephens2016}). The notion of FDR, introduced by~\cite{benjamini1995} aims at controlling the proportion of false positives among the signals whose detection is claimed. As noted 
in~\cite{hara2021}, the FIP provides guarantees on the number of false detections. Indeed, if the prior reflect the true underlying population of objects, and the likelihood is an accurate representation of the data,  among $n$ statistically independent claims made with FIP $= \alpha$, the number of false detections follows a binomial distributions of parameters $\alpha$ and $n$. However, this property is not guaranteed if the model assumptions are incorrect. In the next section, we suggest ways to diagnose incorrect assumptions on the prior and likelihood. 

\section{Model criticism}
The optimal decision is chosen within a set of alternatives, which can be nested in a hierarchical Bayesian models. For instance, in the case of exoplanets there are many different ways to parametrize the noise due to the star~\citep{aigrain2012, foremanmackey2017, delisle2019b}, whose respective Bayesian evidences can be compared \citep[e.g.][]{ahrer2021}. Instead of choosing a particular noise model (a particular likelihood), the FIP can simply be marginalised over the different noise models in a model averaging perspective. However, as noted in~\cite{rubin1984}, including models in increasingly deeper hierarchical structures has a computational limit. In~\cite{hara2021}, we showed numerically that the FIP exhibits a certain robustness to model misspecification, and here we aim to go further.

In the terms of~\cite{box1980}, we tackled the estimation problem (finding a decision in a set of alternatives), and now turn to the criticism problem: is the set of alternatives realistic? In the context of exoplanets, this latter problem is very seldom examined \citep[see tests on the residuals suggested in][]{baluev2013, hara2019}. 

Here, our priority is to gain confidence  that our predictions are calibrated: statistically independent detections made with FIP $= \alpha$ should behave as independent Bernoulli variables with probability of success, when there is indeed a planet, of $1-\alpha$. 
Calibration is deemed desirable in different contexts: statistical inference \citep[e.g.][]{box1980, dawid1982, rubin1984, gneiting2007, draper2010}, predictions in a game-theoric setting \citep[e.g.][]{fostervohra1998, vovkshafer2005} and machine learning  \citep[e. g.][]{song2019}. Even if the events which are assigned a probability may not be repeatable, the model (priors and likelihood) can be viewed as an ``expert'' which will take decisions repeatedly, and we want the expert to make calibrated predictions. If it is not the case, we want to understand which assumptions of the model are faulty.   
\label{sec:mopen}
\subsection{General method}

Calibration is often considered in a supervised manner: the truth can be known. For instance, when assessing the calibration of a method predicting the weather, it can be known for sure whether in fact it was sunny or rainy \citep[e.g.][]{fostervohra1998, vovkshafer2005}.  
However, for exoplanets it cannot be ensured with certainty, and rarely with extremely high probability that a decision is correct. 

To circumvent this issue, we adopt the same Bayesian cross validation formalism as~\cite{vehtari2002} (see also~\cite{draper2010}). The data $y$ on which the inference is based, whenever possible, is divided in two. The first part (training set) is used to compute a posterior for the predictive distribution on the second part (test set), which is then compared to the actual data.  For instance, if $y$ is a time series $y(t_i)_{i=1..N}$, we can separate it in a training sets with $y_\mathrm{train } = \{y(t_i), t_i<t_0\}$ and a test set  $y_\mathrm{test } = \{y(t_i), t_i>t_0\}$ for some threshold $t_0$, and compare $y_\mathrm{test}$ to the distribution of what should be observed at times $t_\mathrm{test } = t_i>t_0$, knowing $y_\mathrm{train}$, which we denote by $p(y^\star \mid y_\mathrm{train})$.

Let us suppose we are able to generate realisations of $y^\star$ following  $p(y^\star \mid y_\mathrm{test})$. We compute several statistics $S_i(y^\star), i=1..p$, where each $S_i$ retrieves a real number, and compare their distributions to $S_i(y_\mathrm{test})$. In~\cite{vehtari2002}, such statistics are defined as utilities expressing the discrepancy between the predicted data and the actual one. To ensure the predictions are calibrated, we envision two possibilities. As in~\cite{gneiting2007}, we can leverage that, denoting by $F_X$ the cumulative distribution function of random variable $X$, $F_{S_i}^{-1}(S_i(y_\mathrm{test}))$ should be distributed uniformly on [0,1]. Second, we can define $S_i(y^\star)$ as Bernoulli variables with probability of success, $S_i(y^\star)=1$, equal to $p_i$. The total number of successes on the $n$ is on average $\sum_{i=1,..,n} p_i$. If the $S_i$s are chosen to be independent, the number of successes follows a Binomial distribution, otherwise the distribution can be estimated with the empirical joint distribution of $(S_i(y^\star))_{i=1,..,n}$, obtained from the different realisations of $y^\star$ following $p(y^\star \mid y_\mathrm{test})$. As a remark, the number of relevant tests is limited by the Data processing inequality~\citep{coverthomas2006}, $I((S_i(y^\star))_{i=1,..,n};y_\mathrm{train}) \leqslant I(y^\star;y_\mathrm{train})$, where $I$ is the mutual information.  

The prior predictive test proposed by~\cite{box1980} is a particular case of our cross-validation test when the training set is the empty set, and the statistic considered is the Bayesian evidence. The posterior predictive test of~\cite{rubin1984, gelman1996}, conveys the idea that, if the exact same experience was to be repeated, we would want the actual observations not to be too improbable an outcome, and chooses $y_\mathrm{test} = y_\mathrm{train} =y$.  

To assess the calibration of our detections, we want to express that if a component has been detected in the training set, it should still be present in the test set. A natural way to define the statistics would be to determine on $y_\mathrm{train}$ the sets $\Theta_i \subset T, i=1,..,n$  maximizing the utility.  Suppose we draw $y^\star$ from the predictive distribution. Conditioning on $y^\star$, we obtain posterior probabilities to have planets in each $\Theta_i $, $p_i(y^\star)$, which can be our statistics $S_i(y^\star) = p_i(y^\star)$.  Now as $y^\star$ follows the  predictive distribution $p(y^\star \mid y_\mathrm{train })$, we obtain samples of the posterior distribution of $p_i(y^\star)$. Unfortunately, such calculations would in general require to evaluate $p_i(y^\star)$ for many realizations of $y^\star$. At least for exoplanets, this is unrealistic. Furthermore, these statistics alone do not allow to diagnose which assumptions of the model should be changed. How the data should be split, and which statistics to choose are difficult questions, and should be decided on a case by case basis.

\subsection{Example}
In the case of radial velocities, we suggest to define the statistics $S_i$ based on the Fourier transform (or periodogram) of the predicted data.
To illustrate this, let us suppose that conditioned on a time series $y_\mathrm{train}$, at times $t_\mathrm{train}$, the predictive distribution of data acquired at times  $t_\mathrm{test}$,  $p(y^\star \mid y_\mathrm{train})$, is such that we can draw samples from it as  
\begin{align}
    y^\star(t_\mathrm{test}) = A \cos \omega t_\mathrm{test} + B \sin \omega t_\mathrm{test}  + \epsilon \label{eq:ystar}
\end{align}
where A, B, $\omega$ are drawn from their posterior distribution knowing $y_\mathrm{train}$, which are assumed to be Gaussian distributions of mean and standard deviation $\mu_A, \sigma_A$, $\mu_B, \sigma_B$ and $\mu_\omega, \sigma_\omega$ respectively. We further assume that $\epsilon$ is a homoskedastic Gaussian noise of zero mean and standard deviation $\sigma$. 

We can define the test statistic as a Bernoulli variable, $T_1(y^\star) =1$ if the modulus of the Fourier transform of $y^\star$ at frequency $\mu_\omega$ is within certain bounds, and 0 otherwise. This expresses the idea that the putative planetary signal is still present. We can also define a series of test $S_i(y^\star) = 1$ if  the amplitude of the Fourier transform of $y^\star$ at frequency $\omega_i, i=1,..,p$ is within certain bounds. This conveys the idea that the noise spectrum should be flat.

To fix ideas, we take $t_\mathrm{test}$ as a hundred equispaced measurements, assuming that they are taken each day. We generate 100,000 realisations of $y^\star$ according to Eq.~\eqref{eq:ystar}, and compute the centred 90\% credible intervals of each Fourier coefficient of $y^\star$. In Fig.~\ref{fig:test_fourier} (a), points show the absolute values of the Fourier coefficient for one of the 100,000 realisations. Four out of a hundred coefficients are outside of the 90\% intervals. Because of random fluctuations, we expect a stochastic variation of the number of coefficients outside bounds.  In Fig.~\ref{fig:test_fourier} (b), we show the distribution of the number of the number of coefficients outside bound, and four is not unusual. Let us now compute the absolute values of the Fourier coefficients of a realisation of $y^\star$ such that instead of generating white noise of variance 1 $m^2/s^2$ on top of a planetary signal, we generate a correlated Gaussian stationary noise  with an exponential autocorrelation $\mathrm{e}^{-\Delta t/\tau}$ with a timescale $\tau =3$ days, such that the diagonal has also a variance of 1 $m^2/s^2$. In that case, we find as expected higher and lower amplitudes at low and high frequency respectively, and twelve tests are outside of bounds.  
\begin{figure}
    \centering
    \includegraphics[width=\linewidth]{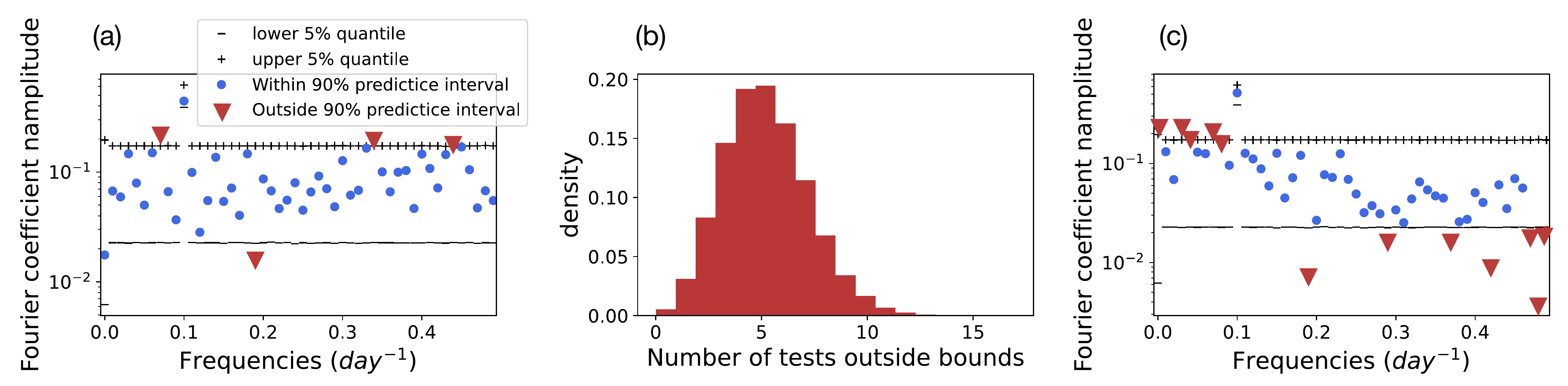}
    \caption{In (a) and (c) we show with markers - and + the bounds of the intervals where the modulus of Fourier coefficients should be as a function of frequency. Round and triangle markers represent the modulus of Fourier coefficients of a certain realisation of predicted data. Points inside and outside the 90\% interval bounds are shown with round and triangle markers, respectively. Figures (a) and (c) correspond to a realisation of the predictive distribution, and a Gaussian stationary noise  with an exponential autocorrelation $\mathrm{e}^{-\Delta t/\tau}$ with a timescale $\tau =3$ days, respectively. Figure (b) represents the histogram of the numbers of Fourier moduli outside bounds obtained with the 100,000 realisations of the predictive distribution. }
    \label{fig:test_fourier}
\end{figure}

To refine our tests, we can take as statistics the amplitudes of the putative planets fitted with a noise model, and the amplitudes of the Fourier transforms of the noise after fitting the planetary signals. The predicted phase of the putative planets can be examined, which offers an alternative to the phase consistency tests of~\cite{hara2022}.


\section{Conclusion}
\label{sec:conclusion}
Our initial aim was to find which parametric components are present in a dataset. We assume that there are $n=0$ to $n_\mathrm{max}$ components, with parameters $(\theta_i)_{i=1...n} \in T^n$ for some space $T$, and there are nuisance signals parametrized by $\eta$.
For data $y$, we assume that a likelihood $p(y\mid n,(\theta_i)_{i=1...n}, \eta)$ has been defined as well as proper priors on $n,(\theta_i)_{i=1...n}, \eta$. We considered detection claims of the form: ``there are $n$ components, one with parameters in $\Theta_1$, ..., one with parameters in $\Theta_n$'' where the $\Theta_i$s are regions of the parameter space belonging to a predefined family of alternatives. For each $i$ such that there is no such component truly with parameters in $\Theta_i$, we count a false detection. Conversely, underestimating the number of components or not finding true ones corresponds to missed detections. This formulation requires the data analyst to choose the candidate regions $\Theta_i$. If $T$ is a metric space, we recommend to choose a set of the closed balls of fixed radius $R$, which fixes the resolution desired on the component parameters. If possible, we recommend to choose $R$ such that two components cannot be closer than $2R$. The optimal detection criterion can be computed as a by-product of the calculations necessary to evaluate Bayesian evidences of models with different numbers of components \footnote{See also the Python notebook example in supplementary material~\citep{hara2023_supp_code}, also available at \url{https://github.com/nathanchara/FIP}}. 

\subsection{General case}

We found that, essentially, provided the components cannot have parameters too close to each other, which we formalise as the separability condition~\ref{def:separability}, the optimal detection procedure consists in selecting the regions $\Theta_i$ with a sufficient posterior probability to ``host'' the parameters of a component (see Eq.~\eqref{eq:criterionFIP_fixed}), and our approach can be seen as an extension of Bayesian false discovery rate to continuous indices (see section~\ref{sec:fdr}). If the separability condition is not met, the problem can be solved with a more general formulation (see section~\ref{sec:utility_some}).
Our significance metric is a posterior probability, which is particularly meaningful if it is calibrated: detections made with probability $\alpha$ are correct, on average, in a fraction $\alpha$ of cases. We leverage this property to test whether the prior and likelihood chosen are realistic. 


Both the optimal criterion discussed in sections~\ref{sec:fip}-\ref{sec:discussion} and the calibration discussed in Section~\ref{sec:mopen}, can be seen through an uncertainty principle. In the first case, the smaller the $\Theta_i$s are chosen, the more resolution we have on the component parameters, but the less probability mass will be contained in the $\Theta_i$s. Similarly, we might ensure calibration using the data of a whole survey (for example, several hundreds of stars observed with the same instrument), or we might want to ensure calibration on a subset of events: for detections around certain type of stars, detections of small planets etc. This restricts the number of tests that can be run, thus the statistical power of calibration tests.

From a theoretical perspective, the next step will be to determine an analytical approximation of the minimum  expected number of missed detections under a constraint on expected number of false detections, as a function of the prior distribution of parameters and the likelihood. This way, it will be possible to predict the optimal capabilities of a survey operating at a certain precision, assuming a certain parametric form for the priors and likelihood. 

\subsection{Exoplanets and astronomy}

For the detection of exoplanets with radial velocities, instead of using the Bayes factor to determine the number of planets, it is more advantageous to proceed as follows. We define $\Delta \omega = 2\pi/T_{obs}$ where $T_{obs}$ is the observational time-span, and a grid of tightly spaced frequencies $\omega_k$ which defines a collection of intervals $I_k = [\omega_k -\Delta \omega/2, \omega_k -\Delta \omega/2 ]$. One has to evaluate the posterior probability to have a planet with frequency in $I_k$ (TIP) or the FIP = 1-TIP, which is computationally challenging~\citep{nelson2020}. The next step is to select the  maximum number $n^\star$ of disjoint intervals with a FIP satisfying Eq.~\eqref{eq:criterionFIP_fixed}. We provide a Python example\footnote{\label{note1}Available in supplementary material~\citep{hara2023_supp_code}, and  \url{https://github.com/nathanchara/FIP}.}.

For other exoplanet observation techniques, if a correct detection is defined as one of the parameters being in a certain region, the optimal procedure is the same. Additional work is needed to make our framework computationally tractable for transits, imaging and microlensing, where datasets are typically much more voluminous. It will also be interesting to test the method in other contexts, such as the detection of gamma ray emissions \citep[e.g.][]{geringersameth2015}.

To detect exoplanets, different models are compared to each other (different number of planets, different noise models). The present work gives a criterion with optimality properties once a set of  alternative models have been considered. So far, whether the ensemble of alternatives considered is plausible, or if all the alternatives are poor, is almost never addressed. In Section~\ref{sec:mopen}, we propose a method to tackle this question, based on the predictive distribution. Such tests are to be refined in future work.




\section*{Acknowledgements}
N. C. Hara thanks Gwenaël Boué for his insightful remarks, Roberto Trotta for his suggestions, Michaël Crétignier for providing the HD21693 HARPS data, Julien Bect for pointing out resemblances with the label-swtching problem, and A. Leleu for his comments on dynamical stability.
N. C. Hara and J.-B. Delisle acknowledge the financial support of the National Centre for Competence in Research PlanetS of the Swiss National Science Foundation (SNSF). T. de Poyferr\'e aknowledges support by the NSF under Grant No.DMS-1928930 while participating in a program hosted by MSRI in Berkeley, Ca, during Spring 2021.

\begin{supplement}
\stitle{Mathematical proofs~\citep{hara2023_supp}}
\sdescription{Contains the proofs of the mathematical results presented in the main text. It also contains a reproduction of the definitions of the main text necessary to the proofs. }
\end{supplement}

\begin{supplement}
\stitle{Python notebook example of FIP calculation~\citep{hara2023_supp_code}}
\sdescription{Notebook that provides an example of FIP calculation based on samples of a nested sampling algorithm. These samples were generated for the analysis of the HD93385 HARPS radial velocity data made in~\cite{unger2021}. This supplementary file is made available in the form of a single zip file which contains the Python code and the samples to be processed.}
\end{supplement}

\appendix

\section{Expression of the utility function}
\label{app:b}
In this section, we show how to obtain the expression of the utility function~\eqref{eq:ujplagamma_othercriterion}, and the constrained problem, Eq.~\eqref{eq:constraints_othercriterion}. We also obtain the expression of the utility function when using the alternative definition of missed detections is:
 \begin{definition}[Missed detections: other definition] 
If $n$ components are claimed, but in fact there are $n'>n$ components truly present in the data, we count $n'-n$ missed detections.
 \label{def:misseddec1}
 \end{definition}

\subsection{Expression of the utility function, and expected number of false and missed detection for the definition of missed detections~\ref{def:misseddec2}}
\label{app:b1}

The expected utility is the term $ E_{\theta,\eta}\left[ U\left\{a(\Theta_1, ..., \Theta_n), (\theta,\eta)\right\} \right]$ in Eq.~\eqref{eq:fullprob}. For the definition of missed detections given in Definition~\ref{def:misseddec2}, The expected utility of choosing $\Theta_1,...,\Theta_n$ is given in Eq. (10) of the main text, reproduced below. 
\begin{align}
 E_{\theta,\eta}\left[ U\left\{a(\Theta_1, ..., \Theta_n), (\theta,\eta)\right\} \right]   =  -n + (1 + \gamma) \sum\limits_{j=1}^n j I_{A_j}(\Theta_1, ..., \Theta_n) -  \gamma  \sum\limits_{k=1}^{n_{max}} k p(k\mid y), \label{eq:ujplagamma_othercriterion_app_supmat}
\end{align}
where $p(k\mid y)$ is defined in Eq.~\eqref{eq:pk} and $I_{A_j}(\Theta_1, ..., \Theta_n)$ is the probability that exactly $j$ detections are correct when choosing $\Theta_1,...,\Theta_n$. 

The constrained optimization problem given in Eq.~\eqref{eq:constraints_othercriterion} is 
\begin{align}
\argmax\limits_{(n,\Theta_1,..\Theta_n)}  \sum\limits_{j=1}^n j I_{A_j}(\Theta_1, ..., \Theta_n)
 \text{  subject to }  n- \sum\limits_{j=1}^n j I_{A_j(\Theta_1, ..., \Theta_n)} \leqslant x.
 \label{eq:constraints_othercriterion_supmat}
\end{align}


To prove Eq.~\eqref{eq:ujplagamma_othercriterion_app_supmat} and Eq.~\eqref{eq:constraints_othercriterion_supmat}, let us first compute the utility function for the claim ``there is no component'', denoted by $a_0$. 
\begin{align}
E_\eta\{U(a_0,\eta)\}  & = 0\times p(0\mid y) - \beta \sum\limits_{k=1}^{n_{max}} k p(k\mid y) .
\label{eq:u0pla}
\end{align}
If in fact there are $k$ components, we ``pay'' $k\beta$, hence the $-\beta \sum_{k=1}^n k p(k\mid y)$ term, where $p(k\mid y)$ the posterior probability of having $k$ components.

When claiming the detection of $n>0$ components with parameters in $\Theta_i$, $i=1...n$, the $\Theta_i$s can be considered as $n$ ``boxes''. We want to evaluate the utility of this claim if the true components have parameters $\theta_1,...,\theta_k$, where $k$ might be different from $n$. We consider ways to put the $\theta_i$ in the boxes such that each $\theta_i$ can go into only one ``box''. 
This simply expresses the fact that we can't claim twice the detection of a given component.  Furthermore, if a $\theta_i$ belongs to several boxes, there might be several ways to inject the $\theta_i$s into the boxes, yielding different number of correct detections. Suppose we claim that there are two components, one in  $\Theta_1$, one in $\Theta_2$, with a non empty intersection. Suppose that the true parameters are such that $\theta_1$ belongs to $\Theta_1 \cap \Theta_2$ and $\theta_2$ belongs to $\Theta_2$. Depending on whether we associate $\theta_1$ to $\Theta_1$ or $\Theta_2$, we have two or only one correct detections. In that case, we choose the injection which leads to as many correct detections as possible. 

We denote by $m$ the maximum number of different $\theta_i$s that we can put in a $\Theta_i$. 
We denote by $A_m^k$ the region of parameter space with $k$ components 
such that there is exactly one component in each of the $\Theta_i$, $i=1..m$, $m\leqslant n$.

 If $k$ planets are truly present in the data, $n$ detections are claimed but only $i$ are correct, it means that the true detections of  $\min (k,n)-i$ are missed. We can penalize this situation by adding a term $-\beta(\min (k,n)-i)$ whenever it happens. The expression of the utility function
\begin{align}
& E_{\theta,\eta}\left[ U\left\{a, (\theta,\eta)\right\} \right]   =  -n \alpha p(0 \mid y )  \label{eq:0pla_n2}\\ &+\left[ -(n-1)\alpha I_{A_1^1} - (n\alpha+\beta)   \left(1 -  I_{A_1^1}\right)\right] p(1\mid y )  \label{eq:1pla_n2} \\ 
&+   \left[ -(n-2)\alpha I_{A_2^2} -((n-1)\alpha+\beta) I_{A_1^2} - (n\alpha +2\beta)  \left(1 -  I_{A_1^2} -   I_{A_2^2}\right)\right] p(2\mid y )      \label{eq:2pla_n2} \\
& \vdots  \\
&+  \left[ \sum\limits_{i=1}^k (-(n-i)\alpha -(k-i)\beta)I_{A_i^k} -  (n\alpha+k\beta) \left(1 - \sum\limits_{i=1}^k I_{A_i^k}\right)  \right] p(k\mid y ) \label{eq:npla_k} \\
& \vdots  \\
&+  \left[ \sum\limits_{i=1}^n -(n-i)(\alpha +\beta)I_{A_i^n} -  n(\alpha+\beta) \left(1 - \sum\limits_{i=1}^n I_{A_i^n}\right)  \right] p(n\mid y ) \label{eq:npla_n2} \\
&+  \left[ \sum\limits_{i=1}^n -(n-i)(\alpha +\beta) I_{A_i^{n+1}} - n(\alpha +\beta) \left(1 - \sum\limits_{i=1}^n I_{A_i^{n+1}}\right) -  \beta \right] p(n+1\mid y ) \label{eq:npla_np12} \\
& \vdots  \\
&+  \left[ \sum\limits_{i=1}^n -(n-i)(\alpha +\beta) I_{A_i^n} -  n(\alpha +\beta) \left(1 - \sum\limits_{i=1}^n I_{A_i^{n_{max}}}\right) - (n_{max}-n) \beta \right] p(n_{max}\mid y ) \label{eq:npla_nmax2}
\end{align}
Re-arranging the terms, we have 
\begin{align}
 E_{\theta,\eta}\left[ U\left\{a, (\theta,\eta)\right\} \right]  & =  -n\alpha + (\alpha + \beta) \sum\limits_{i=1}^n i I_{A_i} -  \beta  \sum\limits_{k=1}^{n_{max}} k p(k\mid y)  \\& =  - (\alpha E[\mathrm{FD}] +  \beta  E[\mathrm{MD}])
\label{eq:unpla_2}
\end{align}
where $ E[\mathrm{FD}] $ and $ E[\mathrm{MD}]$ are the expected numbers of false detections and missed detections when claiming the detection of components with parameters in $\Theta_1,...,\Theta_n$, respectively,
\begin{align}
     E[\mathrm{FD}]& = n - \sum\limits_{i=1}^n i I_{A_i} \\
     E[\mathrm{MD}] &= \bar{n} - \sum\limits_{i=1}^n i I_{A_i},
\end{align}
where $\bar{n} := \sum_{k=1}^{n_{max}} k p(k\mid y)$ does not depend on the number of component claimed $n$.
Assuming that $\alpha \neq 0$ (or equivalently $\alpha > 0$, since $\alpha $ is non negative), we can divide Eq.~\eqref{eq:unpla_2} by $\alpha$. Denoting by $\gamma = \beta/\alpha$, without loss of generality we can maximize
\begin{align}
 E_{\theta,\eta}\left[ U\left\{a, (\theta,\eta)\right\} \right]   =  -n + (1 + \gamma) \sum\limits_{i=1}^n i I_{A_i} -  \gamma  \bar{n}. 
\label{eq:ujplagamma_othercriterion_app}
\end{align}
Where $\bar{n}$ does not depend on the number of planets. 

\subsection{Expression of the utility function, and expected number of false and missed detection for the definition of missed detections~\ref{def:misseddec1}}

\label{app:b2}

In that case the expected utility is 
\begin{align}
& E_{\theta,\eta}\left[ U\left\{a, (\theta,\eta)\right\} \right]   =  -n \alpha p(0 \mid y )  \label{eq:0pla_n}\\ &+\left[ -(n-1)\alpha I_{A_1^1} - n\alpha   \left(1 -  I_{A_1^1}\right)\right] p(1\mid y )  \label{eq:1pla_n} \\ 
&+   \left[ -(n-2)\alpha I_{A_2^2} -(n-1)\alpha I_{A_1^2} - n\alpha   \left(1 -  I_{A_1^2} -   I_{A_2^2}\right)\right] p(2\mid y )      \label{eq:2pla_n} \\
& \vdots  \\
&+  \left[ \sum\limits_{i=1}^n -(n-i)\alpha I_{A_i^n} -  n\alpha \left(1 - \sum\limits_{i=1}^n I_{A_i^n}\right)  \right] p(n\mid y ) \label{eq:npla_n} \\
&+  \left[ \sum\limits_{i=1}^n -(n-i)\alpha I_{A_i^{n+1}} -  n\alpha \left(1 - \sum\limits_{i=1}^n I_{A_i^{n+1}}\right) -  \beta \right] p(n+1\mid y ) \label{eq:npla_np1} \\
& \vdots  \\
&+  \left[ \sum\limits_{i=1}^n -(n-i)\alpha I_{A_i^n} -  n\alpha \left(1 - \sum\limits_{i=1}^n I_{A_i^{n_{max}}}\right) - (n_{max}-n) \beta \right] p(n_{max}\mid y ) \label{eq:npla_nmax}
\end{align}
Re-arranging the terms, we have
\begin{align}
 E_{\theta,\eta}\left[ U\left\{a, (\theta,\eta)\right\} \right]  & =  -n\alpha  + \alpha \sum\limits_{i=1}^n i I_{A_i} -  \beta  \sum\limits_{k=n+1}^{n_{max}} (k-n) p(k\mid y) \\
& =  - (\alpha E[\mathrm{FD}] +  \beta  E[\mathrm{MD}])
\label{eq:unpla}
\end{align}
where $ E[\mathrm{FD}] $ and $ E[\mathrm{MD}]$ are the expected numbers of false detections and missed detections, respectively. 
\begin{align}
     E[\mathrm{FD}]& = n - \sum\limits_{i=1}^n i I_{A_i} \\
     E[\mathrm{MD}] &= \sum\limits_{k=n+1}^{n_{max}} (k-n) p(k\mid y)
\end{align}

Assuming that $\alpha \neq 0$ (or equivalently $\alpha > 0$, since $\alpha $ is non negative), we can divide Eq.~\eqref{eq:unpla} by $\alpha$. Denoting by $\gamma = \beta/\alpha$, without loss of generality we can maximize
\begin{align}
 E_{\theta,\eta}\left[ U\left\{a(\Theta_1, ..., \Theta_n), (\theta,\eta)\right\} \right]  =  -n  + \sum\limits_{j=1}^n j I_{A_j} -  \gamma  \sum\limits_{k=n+1}^{n_{max}} (k-n) p(k\mid y) .
\label{eq:ujplagamma_app}
\end{align}
The expected utility is 
\begin{align}
E_{\theta,\eta}\left[ U\left\{a(\Theta_1, ..., \Theta_n), (\theta,\eta)\right\} \right]   =  -n  + \sum\limits_{j=1}^n j I_{A_j}(\Theta_1, ..., \Theta_n) -  \gamma  \sum\limits_{k=n+1}^{n_{\mathrm{max}}} (k-n) p(k\mid y),
\label{eq:ujplagamma}
\end{align}

For the definition of missed detections adopted in Definition.~\ref{def:misseddec1}, the constrained problem writes 
\begin{align}
\argmin\limits_{(n,\Theta_1,..\Theta_n)} \sum\limits_{k=n+1}^{n_{\mathrm{max}}} (k-n) p(k\mid y)
 \text{  subject to }  n- \sum\limits_{j=1}^n j I_{A_j} \leqslant x .
 \label{eq:constraints}
\end{align}
Among the solutions to this problem, we further select the ones that minimise the term  $n- \sum_{j=1}^n j I_{A_j}$. The rationale is that, for a given value of the objective function, there is no reason to select a solution with a higher expected number of false detection than necessary.

\section{Proof of Lemma 3.1}

\label{app:c}

In this section we prove Lemma 3.1 of the main text, reproduced below.
\label{app:lemma1}
\begin{lemma}
 Let us consider $\Theta_1 \in T,...,\Theta_n \in T$    , $ \forall i_1, i_2 = 1...n, i_1 \neq i_2 $, $\Theta_{i_1} \cap \Theta_{i_2} = \emptyset$. Denoting by 
 $A_i$  the subset of $\Theta$ such that there are exactly $i$ components in $\Theta_1,...,\Theta_n$, then
\begin{align}
 \sum\limits_{j=1}^n j I_{A_j}  = \sum\limits_{i=1}^n \mathrm{TIP}_{\Theta_i} 
\end{align}
where $\mathrm{TIP}_{\Theta_i} $ is defined in Eq.~\eqref{eq:pa}.
\label{lem:separ_app}
\end{lemma}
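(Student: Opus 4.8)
The plan is to recognize that both sides of \eqref{eq:aj_eq_thetai} are the posterior expectation of a single integer-valued random variable---the number of correct detections---and to use pairwise disjointness to express this count as a sum of one indicator per region. For a configuration with $k$ components of parameters $(\theta_1,\dots,\theta_k)$, I would introduce $X_i = \mathbf{1}[\exists\,\ell\le k,\ \theta_\ell\in\Theta_i]$, the indicator that region $\Theta_i$ hosts at least one true component. Comparing with the definition \eqref{eq:pa}, the posterior expectation of $X_i$ is exactly $I_{\Theta_i}$, because $I_{\Theta_i}=\sum_{k\ge1}p(k\mid y)\,I^k_{\Theta_i}$ and $I^k_{\Theta_i}$ is the conditional probability, given $k$ components, that some $\theta_\ell$ falls in $\Theta_i$.

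The disjointness hypothesis is what makes the count additive, and this is the only delicate point. In general the number of correct detections is a maximal matching between true components and the claimed regions, since a single component cannot validate two distinct claims (the ``injection'' chosen to maximise correct detections in Appendix~\ref{app}). When the $\Theta_i$ are pairwise disjoint, however, each $\theta_\ell$ belongs to at most one region, so this matching saturates precisely the occupied regions: the number of correct detections equals $N:=\sum_{i=1}^n X_i$, with no over- or under-counting. It follows that the event $A_j$---``exactly $j$ detections are correct''---coincides with $\{N=j\}$, whence $I_{A_j}=\Pr(N=j\mid y)$.

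To conclude, I would read the left-hand side as the posterior mean of $N$ and apply linearity of expectation: $\sum_{j=1}^n j\,I_{A_j}=\sum_{j=0}^n j\,\Pr(N=j\mid y)=E[N\mid y]=\sum_{i=1}^n E[X_i\mid y]=\sum_{i=1}^n I_{\Theta_i}$. Equivalently, one may perform the decomposition literally: partition $A_j=\bigsqcup_{|S|=j}A_S$ according to which subset $S\subset\{1,\dots,n\}$ of regions is occupied, rewrite $j\,I_{A_j}=\sum_{|S|=j}|S|\,I_{A_S}$, exchange the order of summation to obtain $\sum_{i=1}^n\sum_{S\ni i}I_{A_S}$, and identify the inner sum as $I_{\Theta_i}$, the probability that region $i$ is occupied decomposed over the occupancy patterns of the other regions. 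The main obstacle is purely the bookkeeping on the disjoint-union space $\Theta$ with a varying number of components $k$: one must check that the occupancy decomposition is valid simultaneously for every $k$, and keep clearly in mind that it is disjointness---not linearity---that reduces the correct-detection count to $\sum_i X_i$. Without it the count is a matching number, and the identity genuinely fails, as the $n=2$ overlapping example in the main text illustrates.
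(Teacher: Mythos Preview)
Your proof is correct. Your primary argument---introducing the occupancy indicators $X_i$, identifying $E[X_i\mid y]=I_{\Theta_i}$, noting that disjointness forces the number of correct detections to equal $N=\sum_i X_i$, and then reading $\sum_j j\,I_{A_j}$ as $E[N\mid y]$ via linearity of expectation---is a cleaner and more conceptual route than the paper's. The paper instead performs the combinatorial bookkeeping explicitly: it decomposes each $I_{\Theta_i}$ over the occupancy patterns of the remaining regions, writes $I_{A_j}$ as a sum over the $\binom{n}{j}$ size-$j$ subsets, and then counts that in $\sum_i I_{\Theta_i}$ each ``exactly $j$ regions occupied'' term appears $j$ times. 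This is precisely the alternative you sketch at the end (partition $A_j=\bigsqcup_{|S|=j}A_S$ and swap the order of summation), so your second route coincides with the paper's, while your first route bypasses the combinatorics entirely at the cost of invoking a random-variable viewpoint that the paper does not make explicit. Both approaches rely on the same key observation you correctly isolate: disjointness is what turns the maximal-matching count into the additive quantity $\sum_i X_i$.
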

 We begin with a technical remark. 
\begin{remark}
\label{rem:permutation}
Since all components are interchangeable in the model, the ordering chosen between them is of no consequence. The detection claims are invariant by relabeling of the parameters, between different factors of $T$, i.e. having exactly one component in $\Theta_1$,... exactly one in $\Theta_n$ is equivalent to having exactly one component in $\Theta_{\sigma(1)}$,... exactly one in $\Theta_{\sigma(n)}$
for any permutation $\sigma$ of the $n$ labels. 
\end{remark}
\begin{proof}
Let us denote by 
$ \Theta_1  \wedge \Theta_2 \wedge ...\wedge \Theta_j \backslash \Theta_{j+1}, ..., \Theta_n$ regions of $\Theta$ such that $j$ components are in $\Theta_1, \Theta_2,..., \Theta_j$  and no components are in $\Theta_{j+1}, ..., \Theta_n$. Then, since the $\Theta_i$ are disjoint , we can decompose $\mathrm{TIP}_{\Theta_i} $ as a sum of posterior mass over regions that have a component in $\Theta_i$ but not other component in one of the $\Theta_j$, $j\neq i$, regions that have a component in $\Theta_i$ and $\Theta_j$ but none in $\Theta_k$, $k\neq i$, $k\neq j$ and so on. 
\begin{align}
\mathrm{TIP}_{\Theta_i} =  \sum\limits_{j=0}^{n-1}  \sum\limits_{k_1,...,k_j \in \llbracket 1,n \rrbracket_j\backslash \{i\}} I_{\Theta_{i}  \wedge \Theta_{k_1} \wedge ...\wedge \Theta_{k_j} \backslash \Theta_{k_{j+1}}, ..., \Theta_{k_n} } 
\end{align}
where $\llbracket 1,n \rrbracket_j$ is a draw of $j$ indices without replacement in $\llbracket 1,n \rrbracket$. For $j=1..n$,
\begin{align}
I_{A_j} &=   \sum\limits_{k_1,...,k_j \in \llbracket 1,n \rrbracket_j } I_{\Theta_{k_1}  \wedge \Theta_{k_2} \wedge ...\wedge \Theta_{k_j} \backslash \Theta_{k_{j+1}}, ..., \Theta_{k_n} }  
\end{align}
Then
\begin{align}
\sum\limits_{i=1}^n \mathrm{TIP}_{\Theta_i} = \sum\limits_{i=1}^n\sum\limits_{j=0}^{n-1}  \sum\limits_{k_1,...,k_j \in \llbracket 1,n \rrbracket_j\backslash \{i\}} I_{\Theta_{i}  \wedge \Theta_{k_1} \wedge ...\wedge \Theta_{k_j} \backslash \Theta_{k_{j+1}}, ..., \Theta_{k_n} } 
\end{align}
In this sum, the term $I_{\Theta_{1}  \wedge \Theta_{2} \wedge ...\wedge \Theta_{n}}$ appears $n$ times, the terms $I_{\Theta_{1}  \wedge ...  \wedge \Theta_{i-1} \wedge \Theta_{i+1} \wedge ...\wedge \Theta_{n}} $ appear $n-1$ times, so we obtain the desired result.
\end{proof}

\section{Existence of the solution}
\label{app:existence}
\label{app:d}

The existence of the solution to~\eqref{eq:maxn} can be guaranteed in the following situation. Let us suppose that $T$ is a metric space, and let us denote by $B_a$ a ball (closed or open) of fixed radius $L$  in $T$ centered on $a \in T$.   
\begin{lemma}
    Assume $T$ is a finite dimensional Riemannian manifold
    (or more restrictively  
    a finite product of $\mathbb{S}^1$ (angles) and $\mathbb{R}$ (actions).)
	Let us suppose that   $\mathcal{A} = \{B_{a_1},..., B_{a_n}, k=1,..n_{max}, a_1,...,a_k \in T_0^k \}$ where $T_0$ is a compact subset of $T$. 
    Then the maximisation problem
   ~\eqref{eq:maxn} has a (not necessarily unique) solution.
\end{lemma}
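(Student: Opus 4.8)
The plan is to cast $(P_n)$ in \eqref{eq:maxn} as the maximisation of a semicontinuous function over a compact set and invoke the Weierstrass extreme value theorem. First I would identify an action with its vector of centres: for fixed $n$, a configuration is $\mathbf{a} = (a_1,\dots,a_n) \in T_0^n$ and the objective is $F(\mathbf{a}) = \sum_{i=1}^n I_{B_{a_i}}$, with $I_{B_{a_i}}$ as in \eqref{eq:pa}. Since $T_0$ is compact and a finite product of compacts is compact, $T_0^n$ is compact; the ball of fixed radius $L$ is well defined through the Riemannian distance, so the whole problem lives over the compact ambient set $T_0^n$.

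The analytic core is to show that $a \mapsto I_{B_a}$ is upper semicontinuous. For fixed $k$, the quantity $I^k_{B_a}$ of \eqref{eq:ik} is the posterior mass of $A_a = \{\exists i\le k,\ \theta_i \in B_a\}$. For closed balls, if $a_m\to a$ then $\limsup_m \mathbf{1}_{A_{a_m}} \le \mathbf{1}_{A_a}$ pointwise (a point outside the closed limit ball is eventually outside the moving balls, and the maximum over the finitely many coordinates commutes with the $\limsup$ up to this inequality); reverse Fatou, with domination by the constant $1$, then gives $\limsup_m I^k_{B_{a_m}} \le I^k_{B_a}$, i.e. upper semicontinuity of each $I^k$. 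To pass this through the series $I_{B_a} = \sum_{k\ge 1} p(k\mid y)\, I^k_{B_a}$ even when $n_{\max}=+\infty$, I would note that the tail $\sum_{k>N} p(k\mid y)\, I^k_{B_a}$ is bounded by $\sum_{k>N} p(k\mid y)$ uniformly in $a$, so $I_{B_a}$ is a uniform limit of its finite partial sums. A finite sum of upper semicontinuous functions with nonnegative weights is upper semicontinuous, and a uniform limit of upper semicontinuous functions is upper semicontinuous; hence $I_{B_a}$, and therefore $F$, is upper semicontinuous.

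Finally I would handle the disjointness constraint together with compactness. The genuinely open constraint $d(a_i,a_j)>2L$ defines a non-compact feasible set, so instead I would maximise over its closure $\overline{\mathcal{C}}_n = \{\mathbf{a}\in T_0^n : d(a_i,a_j)\ge 2L,\ i\ne j\}$, a closed subset of the compact $T_0^n$, hence compact (and nonempty for the range of $n$ that can be packed into $T_0$). The upper semicontinuous $F$ attains its maximum on $\overline{\mathcal{C}}_n$ at some $\mathbf{a}^\star$. Reading the detection regions as the open balls $\Theta_i^\star = B_{a_i^\star}^{\circ}$, the inequality $d(a_i^\star,a_j^\star)\ge 2L$ makes them pairwise disjoint, so $\mathbf{a}^\star$ is admissible for $(P_n)$ and realises the maximum.

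The step I expect to be the main obstacle is reconciling the open disjointness requirement with the compactness needed for Weierstrass: the supremum could a priori be approached only by configurations whose closed balls touch in the limit, and then the value might jump if the posterior places mass exactly on a limiting sphere $\partial B_{a^\star}$. This is controlled by the fact that, for a posterior absolutely continuous with respect to the Riemannian volume (the standing assumption once a density $p(\theta_1,\dots,\theta_k,\eta\mid y,k)$ is posited), every sphere has zero mass, so $a\mapsto I_{B_a}$ is in fact continuous and closed and open balls carry the same probability; the touching limit then loses no value and the maximiser over $\overline{\mathcal{C}}_n$ is a genuine maximiser of $(P_n)$. Absent absolute continuity, one notes that the set of centres $a$ with $\mu(\partial B_a)>0$ is at most countable, which again confines the difficulty to a negligible set.
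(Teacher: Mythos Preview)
Your argument is correct and, on several points, more complete than the paper's. Both proofs rest on a Weierstrass-type argument, but they secure the two ingredients differently. For compactness, you use the stated hypothesis that $T_0$ is compact and work directly on $T_0^n$; the paper instead argues that the objective decays to zero as the centres escape to infinity, so that a superlevel set $\{F\ge\epsilon\}$ is compact---curiously bypassing the $T_0$ hypothesis in the statement. For regularity, the paper asserts in one line that integration over balls moving continuously in the Hausdorff topology is continuous (granted absolute continuity of the posterior), whereas you prove upper semicontinuity for closed balls via reverse Fatou, pass through the series in $k$ by a uniform tail bound, and only then invoke absolute continuity to upgrade to full continuity; this makes explicit exactly where the regularity assumption enters. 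Most importantly, you treat the disjointness constraint explicitly: you close the feasible set to $\{d(a_i,a_j)\ge 2L\}$, maximise there, and read the maximiser with open balls so the limiting configuration remains admissible for $(P_n)$. The paper's proof does not address this boundary at all, and your handling is precisely what closes that gap. One small caveat: your parenthetical fallback that ``the set of centres $a$ with $\mu(\partial B_a)>0$ is at most countable'' is the standard statement for fixed centre and varying radius, and is not obviously true when the radius is fixed and the centre varies; since your main line already runs under the same absolute-continuity assumption the paper makes, I would simply drop that remark.
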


\begin{proof}
Note first that if the posterior probability is regular enough (non-singular with respect to Lebesgue measure), which we assume, the problem consists in maximizing, for each 
$n$, a linear combination of integrals of this distribution over sets in $\mathcal{A}$. As the center $a_i$ move continuously, the integration sets $A_i$ move continuously (in the Hausdorff topology for instance), and integration over them is continuous. Thus we are maximizing a continuous functional. 

Let us first show that the problem has a solution for each fixed $n$. The set of candidates is a smooth manifold, and the dependence of the functional to maximize is through integrating a probability distribution over sets of fixed diameters. As the centers $a_i$ of the balls go to infinity, the probability distribution has to become very small, and so does its integrals over fixed-sized balls; thus the value of the function to maximize goes to $0$ has the parameters go to infinty. Since the functional is positive and non-zero, there is some $\epsilon>0$ such that the set on which the functional is bigger than $\epsilon$ is compact. The functional, being continuous, attains its maximum on this set and this is then necessarily a global maximum. Then the maximum for $0\leq n\leq n_{max}$ is a maximum of~\eqref{eq:maxn}. (Or if $n$ is not bounded, the values of the integrals of the posterior distribution as a function of $n$ have to decrease to $0$ uniformly since the whole probability distribution sums to 1, so by the same argument as above the maximum is attained for bounded $n$.)
\end{proof}

\section{Proof of Lemma 3.2 and Lemma 3.4}
\label{app:lemma2}

\label{app:e}

In this section we prove Lemma 3.2 and Lemma 3.4 of the main text, reproduced below.
\begin{lemma}[Lemma 3.2 in the main text]
 Let us suppose that~\eqref{eq:maxn} has a solution $\Theta_1^n \in T,...,\Theta_n^n \in T$ , with $\mathrm{TIP}_{\Theta_1^n} \geqslant ...\geqslant \mathrm{TIP}_{\Theta_n^n}$. 
Then the solution to $(P_{n+1})$ is either $(\Theta_1^n,..., \Theta_n^n, \Theta_{n+1}^\star )$ or such that $\forall i \in \llbracket 1,n+1\rrbracket$, $\exists j \in \llbracket 1,n\rrbracket$ such that $\Theta_{i}^{n+1} \cap  \Theta_{j}^{n} \neq \emptyset$.
\label{lem:find_app}
\end{lemma}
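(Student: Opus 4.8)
The plan is to prove the dichotomy by contraposition: I assume the solution to $(P_{n+1})$ is \emph{not} of the second type, and show that then an appended tuple $(\Theta_1^n,\dots,\Theta_n^n,\Theta_{n+1}^\star)$ attains the maximum. Denote by $(\Phi_1,\dots,\Phi_{n+1})$ a maximizer of $(P_{n+1})$ and set $V:=\sum_{i=1}^{n+1} I_{\Phi_i}$. Negating the second alternative means there is an index $i_0$ with $\Phi_{i_0}\cap\Theta_j^n=\emptyset$ for every $j=1,\dots,n$, i.e. $\Phi_{i_0}\subseteq T\setminus\cup_{j=1}^n\Theta_j^n$. The whole argument is then a sandwich on $V$ built from the optimality of $(\Theta_j^n)$ for $(P_n)$ on one side and the optimality of $(\Phi_i)$ for $(P_{n+1})$ on the other.

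First I would record a feasibility fact: the appended tuple $(\Theta_1^n,\dots,\Theta_n^n,\Theta_{n+1}^\star)$ is an admissible competitor for $(P_{n+1})$, since the $\Theta_j^n$ are pairwise disjoint by hypothesis and $\Theta_{n+1}^\star$ is chosen in $T\setminus\cup_j\Theta_j^n$ by its definition in~\eqref{eq:thetastar}, hence disjoint from all of them. Optimality of $(\Phi_i)$ therefore yields the lower bound $V\geq \sum_{j=1}^n I_{\Theta_j^n}+I_{\Theta_{n+1}^\star}$.

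Next I would produce the matching upper bound using the disjoint region $\Phi_{i_0}$. Deleting $\Phi_{i_0}$ leaves the $n$ pairwise disjoint admissible sets $\{\Phi_i : i\neq i_0\}$, which form a feasible point of $(P_n)$; since $(\Theta_j^n)$ solves $(P_n)$ this gives $\sum_{i\neq i_0} I_{\Phi_i}\leq \sum_{j=1}^n I_{\Theta_j^n}$. Moreover $\Phi_{i_0}$ lies in $T\setminus\cup_j\Theta_j^n$, so it is an admissible competitor in the maximization defining $\Theta_{n+1}^\star$, whence $I_{\Phi_{i_0}}\leq I_{\Theta_{n+1}^\star}$. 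Adding the two inequalities gives $V\leq \sum_{j=1}^n I_{\Theta_j^n}+I_{\Theta_{n+1}^\star}$. Together with the lower bound this sandwiches $V$ and forces both inequalities to be equalities, so $\{\Phi_i:i\neq i_0\}$ is itself an optimal $(P_n)$-solution and $\Phi_{i_0}$ an optimal appended region; in particular the appended tuple attains the maximum of $(P_{n+1})$, which is the first alternative.

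I do not expect a deep obstacle; the care required is logical bookkeeping around non-uniqueness. The sandwich shows that the appending construction \emph{achieves} the optimum, not that every maximizer coincides with $(\Theta_1^n,\dots,\Theta_n^n,\Theta_{n+1}^\star)$ as a labelled tuple, so I would phrase the conclusion as ``an appended tuple is optimal'', which is precisely what Definition~\ref{def:separability} and the iterative procedure of Section~\ref{sec:procedure} use. A secondary point worth one line is that $\Phi_{i_0}$, being a component of a feasible tuple, is already an admissible set of the family $\mathcal{T}$, so being disjoint from all old regions genuinely makes it eligible in the maximization~\eqref{eq:thetastar}; this is the only place admissibility of $\mathcal{T}$ enters. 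Finally, I would note that the assumed ordering $I_{\Theta_1^n}\geq\dots\geq I_{\Theta_n^n}$ is not needed for this argument and only fixes a convenient labelling.
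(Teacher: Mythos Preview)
Your proof is correct and is essentially the same argument as the paper's: both show that if some component $\Phi_{i_0}$ of a $(P_{n+1})$-optimizer is disjoint from all the $\Theta_j^n$, then optimality of $(P_n)$ bounds the remaining $n$ components and the definition of $\Theta_{n+1}^\star$ bounds $\Phi_{i_0}$, forcing the appended tuple to be optimal. The paper packages this via a nested-maximization formulation and an abstract ``remove a non-binding constraint'' property, whereas you write the sandwich inequality directly; your remarks on non-uniqueness and on the irrelevance of the ordering hypothesis are accurate refinements.
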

\begin{proof}
The proof relies on the simple property $(P1)$: if a function $f:E\rightarrow\mathbb{R}$ attains its maximum in a set $X$, then $\forall D \subset E$ such that $X \cap D =\emptyset $, the set of solution to $\argmax_{x \in E\backslash D} f(x)$ is $X$.

Let us consider $\Theta_{n+1} \in T$. The solution to $(P_{n+1})$ can be written as 
\begin{align}
    \argmax\limits_{\substack{\Theta_1 \in T\backslash \Theta_{n+1},... \Theta_{n} \in T\backslash \Theta_{n+1} \\ \forall i,j \in \llbracket 1,n \rrbracket , i\neq j, \Theta_i \cap \Theta_j = \emptyset } } \mathrm{TIP}_{\Theta_{n+1}} +\sum\limits_{i=1}^n    \mathrm{TIP}_{\Theta_i}.
\end{align}
Either $\forall i \in \llbracket 1,n\rrbracket$, $\Theta_{n+1}^{n+1} \cap \Theta_{i}^n = \emptyset$ then thanks to $(P1)$, for $E = T^n$ and $D = \{ x_1,...,x_n \in T^n, \forall i, x_i \notin \Theta_{n+1}^{n+1}  \}$
\begin{align}
    \argmax\limits_{\substack{\Theta_1 \in T\backslash \Theta_{n+1},... \Theta_{n} \in T\backslash \Theta_{n+1} \\ \forall i,j \in \llbracket 1,n \rrbracket , i\neq j, \Theta_i \cap \Theta_j = \emptyset }} \sum\limits_{i=1}^n    \mathrm{TIP}_{\Theta_i} = (\Theta_i^n)_{i=1..n}
\end{align}
As a consequence,
\begin{align}
    \argmax\limits_{\Theta_{n+1} \in T \backslash \cup_{i=1}^n \Theta_i^n } \argmax\limits_{\substack{\Theta_1 \in T\backslash \Theta_{n+1},... \Theta_{n} \in T\backslash \Theta_{n+1} \\ \forall i,j \in \llbracket 1,n \rrbracket , i\neq j, \Theta_i \cap \Theta_j = \emptyset } } \mathrm{TIP}_{\Theta_{n+1}} +\sum\limits_{i=1}^n    \mathrm{TIP}_{\Theta_i}  = 
    (\Theta_1^n,..., \Theta_n^n, \Theta_{n+1}^\star )
\end{align}
 up to a permutation of the indices (see remark B.2). 
If $ \exists i \in \llbracket 1,n+1 \rrbracket, \forall j \in \llbracket 1,n \rrbracket \in  \Theta_{i} \cap  \Theta_{j}^n = \emptyset $ then the same argument applies and the solution to $(P_{n+1})$ is $(\Theta_1^n,..., \Theta_n^n, \Theta_{n+1}^\star )$. 

Let us denote by $\neg P$ the negation of a proposition $P$. Since $ \neg (\exists i \in \llbracket 1,n+1 \rrbracket, \forall j \in \llbracket 1,n \rrbracket,  \Theta_{i} \cap  \Theta_{j}^n = \emptyset) $  = $ \forall  i \in \llbracket 1,n+1 \rrbracket, \exists j  \in \llbracket 1,n \rrbracket,  \Theta_{i} \cap  \Theta_{j}^n \neq \emptyset  $, and the union of the two cases account for all cases, we obtain the desired result. 
\end{proof}

We now consider the case where $T$ is a metric space and the $\Theta_i$s are balls of fixed radius. 

\begin{lemma}[Lemma 3.4 in the main text]
	Let us suppose that $\mathcal{T}$ is the set of 0 to $n_{\mathrm{max}}$ disjoint balls of radius $R$ and centres $(\theta_i)_{i=1...n}$. Denoting by  $\Theta^c = \cup_{i=1}^n B(\theta_i, 3 R) \backslash  \cup_{i=1}^n \Theta_i$, if $\mathrm{TIP}_{\Theta^c} < \mathrm{TIP}_{\Theta^\star_{n+1}} $, then the components are separable of order $n+1$.
	\label{lem:app_balls}
\end{lemma}

\begin{proof}
	$\Theta^c$ is the space described by any set of balls of radius $R$ with a non empty intersection with the $\Theta_i$, $i=1...n$. If $\mathrm{TIP}_{\Theta^c} < \mathrm{TIP}_{\Theta^\star_{n+1}} $ there cannot be $n+1$ disjoint regions $(\Theta_i^{n+1})_{i=1...n+1}$ with non zero intersection with $\Theta_i$, $i=1...n$ with $\sum_{i=1}^{n+1}\mathrm{TIP}_{\Theta_i^{n+1}} \geqslant \mathrm{TIP}_{\Theta_{n+1}^{\star}} +\sum_{i=1}^{n}\mathrm{TIP}_{\Theta_i^{n}} $. 
\end{proof}

\section{Proof of Theorem 4.1}
\label{app:link}
\label{app:f}

In this section, we prove Theorem 4.1 of the main text, re-stated below for clarity
\begin{theorem}[Theorem 4.1 in the main text]
Let us consider a dataset $y$ and suppose that it verifies component separability at all orders, then there exists an increasing function $\gamma(x) \geqslant 0$  such that ~\eqref{eq:ujplagamma_othercriterion_app_supmat} and~\eqref{eq:constraints_othercriterion_supmat}  have the same solution. 
\label{theorem_app}
\end{theorem}

We prove a slightly more general version, accounting for both definitions of missed detections, Definitions~\ref{def:misseddec2} and 
\begin{theorem}
Let us consider a dataset $y$ and suppose that it verifies component separability at all orders, $n=1...n_{\mathrm{max}}$ 
then there exists an increasing function $\gamma(x) > 0$ such that ~\eqref{eq:ujplagamma} and~\eqref{eq:constraints} have the same solution, and a function $\gamma'(x) > 0$ such that ~\eqref{eq:ujplagamma_othercriterion_app_supmat} and~\eqref{eq:constraints_othercriterion_supmat}  have the same solution. 
\label{theorem}
\end{theorem}

The proof  relies on the following idea. Defining $u_n$ and $v_n$ the expected number of false and missed detection when solving~\eqref{eq:maxn}, we prove that $u_n$ is increasing and $v_n$ is decreasing. Thus, solving the constrained problem consists in finding $n_0$, the maximum $n$ such that $u_n<x$. Denoting by $\Delta u_n = u_n - u_{n-1}$, $\delta v_n =  v_{n-1} - v_n $, and $w_n = \Delta u_{n}/ \delta v_{n}$, writing the utility $u_n + \gamma v_n$, we see that provided we can choose a $\gamma(x)$ in between $w_{n_0}$ and $w_{n_0+1}$, the maximum utility problem has the same solution as the constrained one. Now, we prove that under component separability, $\Delta u_n$  and $\delta v_n$ are increasing and decreasing, thus  $w_{n}$ is increasing and choosing an appropriate $\gamma(x)$ is possible. Furthermore, $w_{n_0}$ is an increasing function of $n_0$, which is an increasing function of $x$, so $\gamma(x)$ is an increasing function of $x$. 

Theorem~\ref{theorem_app}  assumes component separability (Definition~\ref{def:separability}). This assumption is stronger than necessary for some of the lemmas, and is not made by default. However, we assume throughout the section that the $\Theta_i$s are pairwise disjoint. Thanks to Lemma~\ref{lem:separ_app}, using  Eq.~\eqref{eq:fipdef}, we can write $n-\sum_{j=1}^n jI_{A_j} = \sum_{i=1}^n \mathrm{FIP}_{\Theta_i}$. We consider $\Theta_1^n,...,\Theta_n^n$, solving~\eqref{eq:maxn}, and define 
\begin{align}
    u_n &:= \sum\limits_{i=1}^n \mathrm{FIP}_{\Theta_i^n} \label{eq:un}\\
    v_n &:= \sum\limits_{k=n+1}^{n_{\mathrm{max}}} (k-n) p(k\mid y) \label{eq:vn} \; \; \; \; ; \; \;  \; \;   {v'}_n = \bar{n} - n +  \sum\limits_{i=1}^n \mathrm{FIP}_{\Theta_i^n}. 
\end{align}
where $\bar{n} = \sum_{k=1}^{n_{max}} k p(k\mid y)$.
The sequence $u_n$ is the expected number of false detections, $v_n$ and ${v'}_n$ are the expected number of missed detections for the missed detection  Definitions of~\ref{def:misseddec1} and ~\ref{def:misseddec2}, respectively. Note that $u_n$, $v_n$ and $v'_n$ depend on $y$, but we chose not to make that dependence explicit to simplify notations. 
\begin{lemma}
$\forall y $ in the sample space  the sequence $(u_n)_{n=1..n_{max}}$ is increasing, $(v_n)_{n=1..n_{max}}$ and $(v'_n)_{n=1..n_{max}}$ are  decreasing.
\end{lemma}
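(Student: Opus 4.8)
The plan is to prove the three monotonicities separately, relying only on the variational description of $u_n$ as the minimum of $\sum_i \mathrm{FIP}_{\Theta_i}$ over pairwise disjoint regions (problem \eqref{eq:maxn}) together with the elementary fact that $\mathrm{FIP}_\Theta = 1 - I_\Theta \in [0,1]$, since $I_\Theta$ is a posterior probability. As the narrative preceding the lemma already warns, component separability is not needed here; I would only use the existence of each optimiser $(\Theta_i^n)_{i=1..n}$, guaranteed by the existence lemma of Appendix~\ref{app:existence}.

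For the monotonicity of $u_n$, I would argue by feasibility of a truncated optimiser. Fix $y$, let $(\Theta_i^{n+1})_{i=1..n+1}$ solve $(P_{n+1})$, and discard any one region $\Theta_j^{n+1}$. The remaining $n$ regions are still pairwise disjoint, hence an admissible point of $(P_n)$, so by optimality of $u_n$ we get $u_n \leqslant u_{n+1} - \mathrm{FIP}_{\Theta_j^{n+1}} \leqslant u_{n+1}$, the last step because $\mathrm{FIP}_{\Theta_j^{n+1}} \geqslant 0$. This proves $(u_n)$ is increasing.

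The sequence $v_n$ is handled by a direct telescoping of its definition: splitting off the $k=n+1$ term and collapsing the rest gives $v_n - v_{n+1} = \sum_{k=n+1}^{n_{\mathrm{max}}} p(k\mid y) = p(k\geqslant n+1\mid y) \geqslant 0$, so $(v_n)$ is decreasing. For $v'_n = \bar n - n + u_n$ the difference rearranges to $v'_n - v'_{n+1} = 1 - (u_{n+1}-u_n)$, so decrease of $v'_n$ is equivalent to the increment bound $u_{n+1}-u_n \leqslant 1$. To obtain the upper bound I would run the converse construction: append to the optimiser $(\Theta_i^n)_{i=1..n}$ of $(P_n)$ one further region $\Theta_{n+1}$ disjoint from the others, yielding an admissible point of $(P_{n+1})$ and hence $u_{n+1} \leqslant u_n + \mathrm{FIP}_{\Theta_{n+1}} \leqslant u_n + 1$ because $\mathrm{FIP}\leqslant 1$. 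Together with $u_{n+1}\geqslant u_n$ this confines the increment of $u_n$ to $[0,1]$ and makes $v'_n$ decreasing.

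The only point demanding care is the admissibility of the comparison configurations built in the two $u_n$ arguments: dropping a region from pairwise disjoint regions is automatically disjoint, but appending one presupposes that $\mathcal{T}$ contains an $(n+1)$-st region disjoint from a given $n$. This holds whenever $n+1\leqslant n_{\mathrm{max}}$ in the setting of Appendix~\ref{app:existence} and is implicit in the very existence of the $(P_{n+1})$ optimiser; I expect this bookkeeping, rather than any substantive estimate, to be the main thing to state cleanly.
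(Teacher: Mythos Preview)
Your proof is correct and follows essentially the same route as the paper: optimality of $(P_n)$ applied to a truncated or augmented optimiser, together with the telescoping of $v_n$. The only cosmetic difference is that the paper argues $u_n$ is increasing by contradiction (deriving $I_{\Theta_{i_0}^{n+1}}>1$ from $u_{n+1}<u_n$), whereas you state the same comparison directly; your formulation is in fact slightly cleaner and makes the ``append a disjoint region'' caveat for $v'_n$ explicit where the paper leaves it implicit.
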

\begin{proof}
 Let us suppose that there exists $n$ such that $u_{n+1} < u_n$. That is 
\begin{align}
 n+1 - \sum\limits_{i=1}^{n+1}  \mathrm{TIP}_{\Theta_i^{n+1}}  <  n -   \sum\limits_{i=1}^{n}  \mathrm{TIP}_{\Theta_i^{n}}
\end{align}
This is equivalent to 
\begin{align}
 1 + \sum\limits_{i=1}^{n}  \mathrm{TIP}_{\Theta_i^{n} }<  \sum\limits_{i=1}^{n+1}  \mathrm{TIP}_{\Theta_i^{n+1}} 
\end{align}
Let us denote by $i_0$ an index such that $i_0 = \argmax_{i=1..n+1} \mathrm{TIP}_{\Theta_i^{n+1}}$. Then 
\begin{align}
1 \leqslant 1 + \sum\limits_{i=1}^{n}  \mathrm{TIP}_{\Theta_i^{n} }-\sum\limits_{i=1,i\neq i_0}^{n+1}  \mathrm{TIP}_{\Theta_i^{n+1}}  <   \mathrm{TIP}_{\Theta_{i_0}^{n+1}} 
\end{align}
Where the left inequality stems from the definition of the ${\Theta_i}^{n}$ and ${\Theta_i}^{n+1}$. Indeed, by definition 
$\sum_{i=1}^{n}  \mathrm{TIP}_{\Theta_i^{n}}$ is the maximum sum of probability mass on $n$ disjoint $\Theta_i$, 
\begin{align}
\sum\limits_{i=1}^{n}  \mathrm{TIP}_{\Theta_i^{n}} \geqslant \sum\limits_{i=1,i\neq i_0}^{n+1}  \mathrm{TIP}_{\Theta_i^{n+1}} .
\end{align}
We then have $1 <  \mathrm{TIP}_{\Theta_{i_0}^{n+1}} $, which is absurd. 

We have $v_{n+1} - v_n = - \sum_{k=n+1}^{n_{\mathrm{max}}}p(k \mid y) \leqslant 0$, and $v'_{n+1} - v'_n = u_{n+1} - u_n -1 = \sum_{i=1}^{n}  \mathrm{TIP}_{\Theta_i^{n}} - \sum_{i=1}^{n+1}  \mathrm{TIP}_{\Theta_i^{n+1}} \leqslant 0$ by definition of $\Theta_i^{n}$ and $\Theta_i^{n+1}$.
\end{proof}

Because of this result, we can now ensure that the solution to the constrained problem is simply taking the maximum $n$ for which the constraint is verified.

\begin{lemma}
 If $(u_{n+1}-u_n)_{n=0..n_\mathrm{max}}$ is increasing, there exists an increasing function $\gamma(x) \geqslant 0$ such that the solution maximising~\eqref{eq:ujplagamma_app} solves the constrained problem~\eqref{eq:constraints} and an increasing function $\gamma'(x) \geqslant  0$ such that the argument maximising problem~\eqref{eq:ujplagamma_othercriterion_app_supmat} solves the constrained problem~\eqref{eq:constraints_othercriterion_supmat}. 
\label{lem:utility_eq_app}
\end{lemma}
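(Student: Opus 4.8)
The plan is to handle both assertions at once, since they differ only in the bookkeeping of missed detections. Write $\phi_\gamma(n):=u_n+\gamma\,m_n$, where $m_n=v_n$ for the first claim and $m_n=v'_n$ for the second; by the expansion in Appendix~\ref{app}, maximising \eqref{eq:ujplagamma} (resp. \eqref{eq:ujplagamma_othercriterion}) over the number of components is, up to an $n$-independent constant, the same as minimising $\phi_\gamma(n)$ over $n$. First I would record the one-step increments. Setting $\Delta u_n=u_n-u_{n-1}$ and $\delta m_n=m_{n-1}-m_n$, a direct computation from \eqref{eq:vn} gives $\delta v_n=\sum_{k=n}^{n_{\max}}p(k\mid y)=p(k\geqslant n\mid y)$ and, using $v'_n=\bar n-n+u_n$, $\delta v'_n=1-\Delta u_n$. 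By the previous lemma both $\delta m_n$ are nonnegative; $\delta v_n$ is moreover strictly decreasing, and $\delta v'_n$ is decreasing precisely because $\Delta u_n$ is assumed increasing.

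Next I would pin down the constrained optimum. Because $u_n$ is increasing, the feasible set $\{n:u_n\leqslant x\}$ is an initial segment $\{0,\dots,n_0(x)\}$ with $n_0(x)=\max\{n:u_n\leqslant x\}$, and since $m_n$ is decreasing its minimum over this segment is reached at the endpoint. Hence the solution of \eqref{eq:constraints} (resp. \eqref{eq:constraints_othercriterion}) is exactly $n=n_0(x)$, which is a nondecreasing function of $x$.

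The heart of the argument is that $\phi_\gamma$ is discretely convex. Its forward difference is $\phi_\gamma(n)-\phi_\gamma(n-1)=\Delta u_n-\gamma\,\delta m_n$, so its second difference equals $(\Delta u_{n+1}-\Delta u_n)-\gamma(\delta m_{n+1}-\delta m_n)$, which is $\geqslant 0$ since $\Delta u_n$ increases and $\delta m_n$ decreases. A convex sequence is minimised at the last index where its increment is still nonpositive, i.e. at the largest $n$ with $\gamma\geqslant w_n$, where $w_n:=\Delta u_n/\delta m_n$. The decisive point is that $w_n$ is increasing: it is the product of the positive increasing sequence $\Delta u_n$ with the positive increasing sequence $1/\delta m_n$. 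Therefore any $\gamma$ with $w_{n_0(x)}\leqslant\gamma\leqslant w_{n_0(x)+1}$ forces the minimiser of $\phi_\gamma$ to be $n_0(x)$; such a $\gamma$ exists because $w$ is increasing, so the two formulations share the solution $n_0(x)$. Finally I would set $\gamma(x):=w_{n_0(x)}$ (and $\gamma'(x)$ the same with $\delta v'$): since $n_0(x)$ is nondecreasing in $x$ and $w$ is increasing in its index, $\gamma(x)$ is nondecreasing, giving the asserted increasing function.

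I expect the main obstacle to be the strict positivity of the denominators $\delta m_n$ entering $w_n$, which is what makes the quotients well defined and the gap $w_{n_0}\leqslant w_{n_0+1}$ genuine. For $\delta v_n=p(k\geqslant n\mid y)$ this can only fail beyond the largest $n$ carrying positive posterior mass, where both problems trivially agree; for $\delta v'_n=1-\Delta u_n$ one must verify $\Delta u_n<1$, which follows from the optimality defining \eqref{eq:maxn}, as appending a distant low-probability ball to the solution of $(P_{n-1})$ shows $\sum_i I_{\Theta_i^n}\geqslant\sum_i I_{\Theta_i^{n-1}}$ and hence $\Delta u_n\leqslant 1$. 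These degenerate indices, together with the boundary cases $n_0(x)\in\{0,n_{\max}\}$, I would dispatch by permitting $w_n=+\infty$ and reading the selection interval accordingly.
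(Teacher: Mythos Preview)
Your proof is correct and follows essentially the same route as the paper: both identify the constrained optimum as the endpoint $n_0(x)$ of the feasible initial segment, show that the ratio $w_n=\Delta u_n/\delta m_n$ is increasing (numerator increasing by hypothesis, denominator decreasing), and pick $\gamma(x)\in[w_{n_0},w_{n_0+1}]$. Your framing via discrete convexity of $\phi_\gamma$ is a slightly cleaner packaging of the same telescoping inequalities the paper writes out for $n\leqslant n_0$ and $n>n_0$, and your treatment of the degenerate denominators ($\delta v_n=0$ beyond the posterior support, $\delta v'_n=0$ when $\Delta u_n=1$) matches the paper's remedy of truncating $n_{\max}$ or allowing $w_n=+\infty$.
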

\begin{proof}

With the notation above, we have seen that $u_n$ is increasing, $v_n$ is decreasing, and $v_{n-1}-v_n$ is decreasing.
Furthermore, by hypothesis $u_n-u_{n-1}$ is increasing, which by definition of $v'_{n}$ means that $v'_{n-1}-v'_n$ is decreasing. In the following, we reason on $v_n$ but the argument is identical if $v_n$ is replaced by $v'_n$.


Let us fix $x>0$. The constrained problem is
$$\min_n v_n \quad\text{  subject to } u_n <= x$$
while the maximum utility problem can be rewritten as 
$$\min_n \left(v_n + \frac 1\gamma u_n\right).$$
Since $u_n$ is increasing, there is a highest $n_0=n(x)$ such that some configuration satisfies the constraint, i.e. such that $u_{n_0} <= x$. Since $v_n$ is decreasing, the solution of the constrained problem is found for $n=n_0$, for any configuration satisfying the constraint. We want to choose $\gamma$ such that the solution of the maximum utility problem is also at $n_0$, for a configuration satisfying the constraint. We show that we can choose $\gamma$ such  that taking $n\neq n_0$ leads to a larger value of $v_n+\frac 1\gamma u_n$.

From our hypotheses, we see that  the ratio of $u_{n+1} - u_{n}$ and $v_{n+1} - v_{n}$  is increasing.
For $n \leqslant n_0$, we will have 
$$v_n+\frac1\gamma u_n \geqslant v_{n_0}+\frac1\gamma  u_{n_0}$$
if we take 
\begin{equation} \label{eq:gammamore}
\gamma \geqslant \frac{u_{n_0}-u_{n_0-1}}{v_{n_0-1}-v_{n_0}}.
\end{equation}
Note that if $v_{n_0} - v_{n_0+1} = 0$, since $v_{n} - v_{n+1}$ is decreasing it means that for $n\geqslant n_0$ we have also $v_{n} - v_{n+1} = 0$, and we can always restrict the reasoning to $n_\mathrm{max}$ being the highest $n$ such that  $v_{n} - v_{n+1} \neq 0$.
For $n>n_0$, we will have 
$$v_{n}+\frac1\gamma u_n \geqslant v_{n_0} + \frac1\gamma u_{n_0}$$
if 
$$\gamma \leqslant \frac{u_{n_0+1}-u_{n_0}}{v_{n_0} - v_{n_0+1}}.$$

These two conditions can  be satisfied since
$$\frac{u_{n_0}-u_{n_0-1}}{v_{n_0-1}-v_{n_0}} \leqslant \frac{u_{n_0+1}-u_{n_0}}{v_{n_0} - v_{n_0+1}}.$$

 Choosing $\gamma$ between those two bounds gives it as an increasing function of $n_0$, thus as an increasing function of $x$.   
\end{proof}

As long as the sequence $(u_{n+1}^y-u_n^y)_{n=0...n_{\mathrm{max}}}$ is increasing, maximising the utility and the constrained problem have the same solutions. This is not guaranteed in the general case, but can be ensured under the following condition.
\begin{lemma}
If $\forall n>0$, $\exists i_0$, $\forall j = 1.. n-1$ , $\Theta_{i_0}^{n+1} \cap \Theta_j^{n-1} =\emptyset$, the sequence $(u_{n+1} - u_{n})_{n=1..n_{max}}$ is increasing. 
\label{lem:uincrease_app}
\end{lemma}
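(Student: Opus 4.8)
The plan is to recast the statement ``$(u_{n+1}-u_n)_n$ increasing'' as a concavity property and then establish that property by two short exchange arguments. Write $M_n := \sum_{i=1}^n I_{\Theta_i^n}$ for the maximal total inclusion probability achievable by $n$ pairwise disjoint regions, so that a solution $(\Theta_i^n)_{i=1..n}$ of $(P_n)$ attains it. Since $\mathrm{FIP}_\Theta = 1 - I_\Theta$, the definition~\eqref{eq:un} gives $u_n = n - M_n$, hence $u_{n+1}-u_n = 1-(M_{n+1}-M_n)$. Therefore the increment sequence $(u_{n+1}-u_n)_n$ is increasing if and only if $(M_{n+1}-M_n)_n$ is decreasing, i.e. $M_n$ is concave, $M_{n+1}-M_n \leqslant M_n - M_{n-1}$. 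This inequality is all that remains to prove.

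First I would fix $n>0$ and take the index $i_0$ supplied by the hypothesis, so that $\Theta_{i_0}^{n+1}$ is disjoint from every $\Theta_j^{n-1}$, $j=1,\dots,n-1$. For the upper bound, delete $\Theta_{i_0}^{n+1}$ from the optimal configuration of $(P_{n+1})$: the remaining $n$ regions stay pairwise disjoint, so they are admissible for $(P_n)$ and their total mass $M_{n+1}-I_{\Theta_{i_0}^{n+1}}$ cannot exceed $M_n$, whence
\begin{align}
M_{n+1}-M_n \leqslant I_{\Theta_{i_0}^{n+1}}.
\end{align}

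For the lower bound, instead append $\Theta_{i_0}^{n+1}$ to the optimal configuration $\Theta_1^{n-1},\dots,\Theta_{n-1}^{n-1}$ of $(P_{n-1})$. These $n-1$ regions are pairwise disjoint by the constraint in $(P_{n-1})$, and by the hypothesis $\Theta_{i_0}^{n+1}$ meets none of them, so the resulting $n$ regions are pairwise disjoint and admissible for $(P_n)$; their total mass $M_{n-1}+I_{\Theta_{i_0}^{n+1}}$ is again at most $M_n$, giving
\begin{align}
I_{\Theta_{i_0}^{n+1}} \leqslant M_n - M_{n-1}.
\end{align}
Chaining the two displays yields $M_{n+1}-M_n \leqslant M_n - M_{n-1}$, the desired concavity, and hence the lemma. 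I would only need to record the boundary conventions ($M_0=0$, so the case $n=1$ is covered with a vacuous hypothesis) to span the full range $n=1,\dots,n_{\max}$.

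The crux, and the sole place the hypothesis is used, is the lower-bound step. Deleting any region from an optimal configuration automatically produces an admissible one of smaller cardinality, so the upper bound is free; but \emph{inserting} a region into a smaller optimal configuration preserves the disjointness constraint only if that region misses all the regions already present. The hypothesis is exactly the guarantee that the \emph{same} region $\Theta_{i_0}^{n+1}$ can serve in both roles, which is what lets the two bounds be combined through the common quantity $I_{\Theta_{i_0}^{n+1}}$; without it the two inequalities would involve different regions and could not be chained.
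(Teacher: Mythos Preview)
Your proof is correct and is essentially the paper's argument, just presented directly rather than by contradiction: the paper assumes $u_{n+1}-u_n<u_n-u_{n-1}$, rewrites it as $M_n-\bigl(M_{n+1}-I_{\Theta_{i_0}^{n+1}}\bigr)<\bigl(I_{\Theta_{i_0}^{n+1}}+M_{n-1}\bigr)-M_n$, and then observes that the left side is $\geqslant 0$ (your ``delete $\Theta_{i_0}^{n+1}$'' step) while the right side is $\leqslant 0$ (your ``append $\Theta_{i_0}^{n+1}$'' step), a contradiction. Your introduction of $M_n$ and the explicit isolation of the two exchange inequalities make the mechanism clearer, but the content is identical.
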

\begin{proof}
 Let us suppose that $\exists n\geqslant1$ such that $ u_{n+1} -  u_{n} <  u_{n} -  u_{n-1} $. Replacing by the explicit expression of $u_n$, the inequality is equivalent to 
\begin{align}
\sum\limits_{i=1}^{n}  \mathrm{TIP}_{\Theta_i^{n}}  - \sum\limits_{i=1}^{n+1}  \mathrm{TIP}_{\Theta_i^{n+1}}  < \sum\limits_{i=1}^{n-1}  \mathrm{TIP}_{\Theta_i^{n-1}}  - \sum\limits_{i=1}^{n}  \mathrm{TIP}_{\Theta_i^{n}} 
\label{eq:ineq_absurd}
\end{align}
By hypothesis, $\exists i_0$, $\forall j =1...n-1$, $\Theta_{i_0} \cap \Theta_j^{n-1} =\emptyset$, Eq.~\eqref{eq:ineq_absurd} can be written 
\begin{align}
\sum\limits_{i=1}^{n}  \mathrm{TIP}_{\Theta_i^{n}}  - \sum\limits_{i=1, i\neq i_0}^{n+1}  \mathrm{TIP}_{\Theta_i^{n+1}}  <  \mathrm{TIP}_{\Theta_{i_0}^{n+1}} + \sum\limits_{i=1}^{n-1}  \mathrm{TIP}_{\Theta_i^{n-1}}  - \sum\limits_{i=1}^{n}  \mathrm{TIP}_{\Theta_i^{n}} 
\end{align}
The term $ \mathrm{TIP}_{\Theta_{i_0}^{n+1}} + \sum\limits_{i=1}^{n-1}  \mathrm{TIP}_{\Theta_i^{n-1}} $ is a sum of $n$ $\mathrm{TIP}_{\Theta_i}$ with disjoint $\Theta_i$. By definition of $\Theta_i^n$, the right hand side of the inequality is less than or equal to 0 and the left hand side of the inequality is greater than or equal to 0, which is absurd.  

If  $\forall i = 1... n+1$, $\exists j = 1... n-1$, $\Theta_i^{n+1} \cap \Theta_j^{n-1} \neq \emptyset$. In that case, we also have $\forall i$, $\exists j \in \llbracket 1, n \rrbracket$ $\Theta_i^n \cap \Theta_j^{n-1} \neq \emptyset$, otherwise due to lemma~\ref{lem:find_app} this would lead to a contradiction. 
\end{proof}
If the condition of Lemma~\ref{lem:uincrease_app} is not satisfied one can find a counter example where $ u_{n+1}^y -  u_{n}^y <  u_{n}^y -  u_{n-1}^y $ and the equivalence of utility maximisation and optimisation with constraints is not guaranteed. 
 Finally, we have the proof of theorem~\ref{theorem},
\begin{proof}
Under the hypothesis of separability, by lemma~\ref{lem:uincrease_app},  $(u_{n+1}^y - u_{n}^y)_{n=1...n_{\mathrm{max}}}$ is increasing, and by lemma~\ref{lem:utility_eq_app}, we have the desired result. 
\end{proof}

\section{Other definition of missed detections}
\label{app:missed_detections_otherdef}
\label{app:g}

In this section, we show that the optimal procedure is similar if the false detections are defined as in~\cite{hara2022}, which is definition~\ref{def:misseddec1}.

For the missed detection definition used in the main text, Definition~\ref{def:misseddec2}, we recall Eq. (14) of the main text. Provided the separability condition (see Definition~\ref{def:separability}) is verified, a detection of a $n+1$-th planet is accepted if 
\begin{align}
    \mathrm{FIP}_{\Theta_{n+1}^{\star}} \leqslant \frac{\gamma}{\gamma+1}, 
    \label{eq:criterionFIP_fixed_app}
\end{align}
where $\Theta_{n+1}^{\star}$ is defined in Eq.~\eqref{eq:thetastar}.

For the missed detection Definition~\ref{def:misseddec1}, from Eq.~\eqref{eq:ujplagamma}, the $n+1$ component model has a greater utility than the $n$ component model if and only if 
\begin{align}
    \mathrm{FIP}_{\Theta_{n+1}^{\star}} \leqslant \gamma p(k\geqslant n+1\mid y) .
    \label{eq:criterionFIP}
\end{align} 
where $ p(k\geqslant n+1\mid y) $ is the probability that there are $n+1$ components or more. We add components until this criterion is violated.

Eq.~\eqref{eq:criterionFIP} means in particular that, for a given $\gamma$, the more components are claimed, the more stringent  the criterion to add a component becomes, since the term $p(k\geqslant n+1\mid y) $
gets smaller as $n$ increases. This contrasts with decisions based on a fixed threshold, for instance selecting a model with a Bayes factor greater than 150~\citep{kassraftery1995}.

  In that case, the criterion of Eq.~\eqref{eq:criterionFIP} is more stringent than the criterion of Eq. (14) of the main text (    $\mathrm{FIP}_{\Theta_{n+1}^{\star}} \leqslant \frac{\gamma}{\gamma+1}$). This behaviour is to be expected, because in the second case, the utility function has an extra penalization of missed detections. In Fig.~\ref{fig:mistakes_level_white_2}, we show the number of false detections and missed detections obtained with the new definition of missed detections on the simulation presented in Section 5.4 of the main text. In Fig.~\ref{fig:mistakes_roc_red_2}, we show the total number of mistakes as the detection threshold $\gamma$ varies. 

  \begin{figure}
\includegraphics[width=\linewidth]{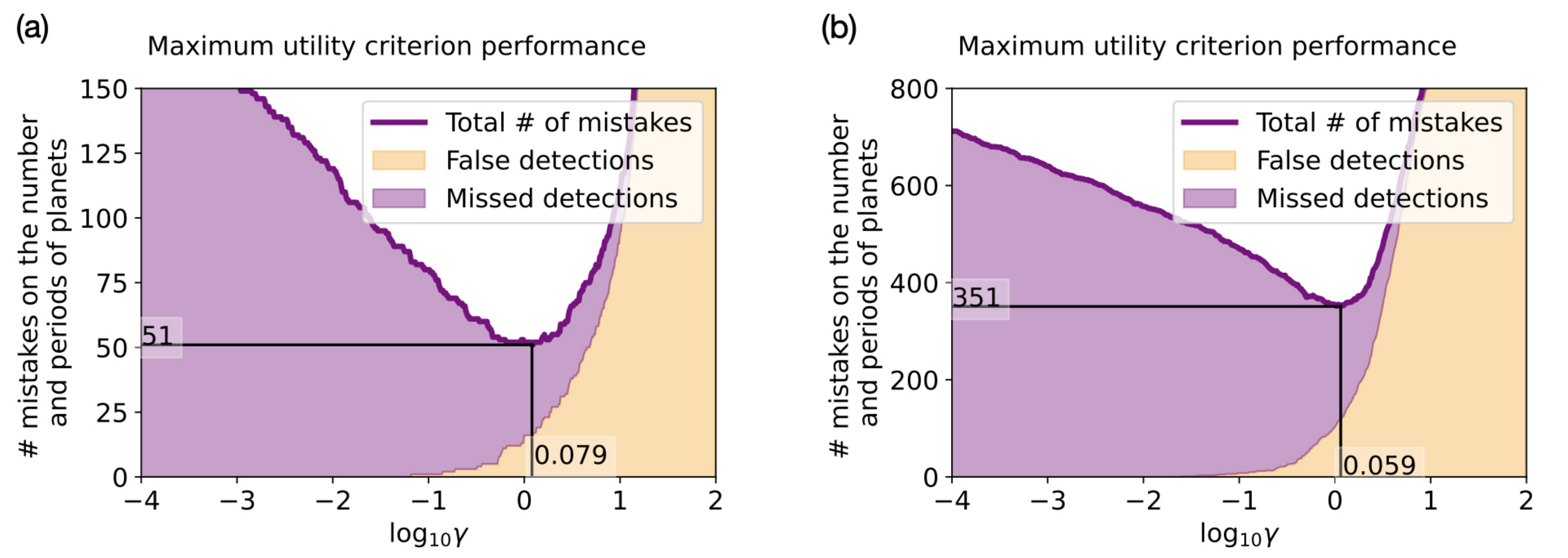}
\caption{Missed + false detections, in yellow and purple respectively, as a function of the detection threshold, $\log_{10} \gamma$ where $\gamma$ is defined in Eq.~\eqref{eq:criterionFIP}. (a) is obtained on the high SNR simulation and (b) the low SNR simulation. The black plain lines show where the minimum of mistakes is attained. }
\label{fig:mistakes_level_white_2}
\end{figure}
  \begin{figure}
\includegraphics[width=\linewidth]{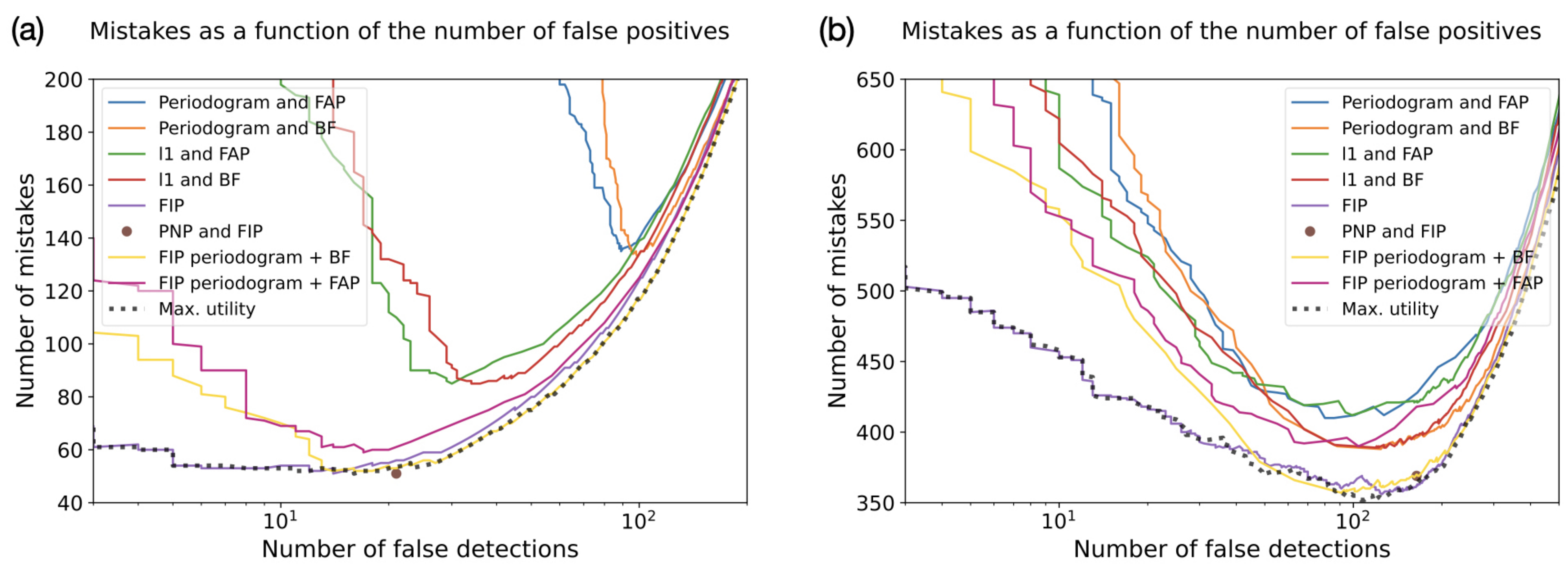}
\caption{Total number of mistakes (false+ missed detections) as a function of the number of false detections for the different analysis methods as their detection threshold vary, similarly to a ROC curve. (a) and (b) are obtained on the high and low SNR simulations, respectively.   The labels of the curves correspond to the methods used to find the candidate periods and to assess their significance. For instance, Periodogram and Bayes factor means that the candidate period is selected with a periodogram, and its significance assessed with a Bayes factor. }
\label{fig:mistakes_roc_red_2}
\end{figure}

\section{Data analysis methods}
\label{app:methods}

\label{app:h}

In Section 5.4 of the main text, we compare the performances of several methods, which we here describe in detail.  We first make explicit a few terms. The Bayes factor (BF) here refers to the ratio $p(y \mid k+1 )/p(y \mid k) $, where $p(y \mid k)$ is the Bayesian evidence (see Eq.~\eqref{eq:evidence}).
Bayes factor can also be seen in the framework of maximum utility~\citep{bernardosmith2009}. Assuming there are only two competing models, $M_0$ and $M_1$, for $i=0,1$ the utility function is 1 if $M_i$, is selected and, is the true model, and 0 if it is not. We define the posterior number of planets (PNP) as 
\begin{align}
  p(k \mid y) =  \frac{p(y \mid k) p(k)}{\sum\limits_{i=0}^{n_\mathrm{max}} p(y \mid i) p(i) } 
    \label{eq:pnp}
\end{align}
Furthermore, we here define the periodogram as in~\cite{delisle2019a}. This one is defined as a difference of log likelihoods of two models: a null hypothesis $H_0$ and a model $K_\omega$ with a sinusoidal component at frequency $\omega$. The periodogram at frequency $\omega$ is defined as
\begin{align}
\mathcal{P}(\omega) = \max_{\theta_{K_\omega}} \log p( y \mid \theta_{K_\omega}) - \max_{\theta_{H_0}} \log p( y \mid \theta_{H_0})  .
\end{align}
Denoting by $\mathcal{N}( x, \mathbf{V})$ a multivariate Gaussian  distribution of mean $ x$ and covariance $\mathbf{V}$
\begin{align}
H_0: \; y \sim & \; \mathcal{N}({0}, \mathbf{V_0}) \\
K_\omega: \; y = & A\cos \omega  t + B\sin \omega  t +  \epsilon, \; \; \epsilon \sim \mathcal{N}({0},  \mathbf{V_0})
\end{align}
where $V_0$ is the covariance matrix used to generate the noise in our simulated datasets. The periodogram is computed on a tightly spaced grid of frequencies between frequency frequency $\omega_{\mathrm{min}}$ and $\omega_{\mathrm{max}}$.
Denoting by $P^\star$ the maximum value of the periodogram, the false alarm probability (FAP) is defined as the probability that, knowing the null hypothesis, the maximum of the periodogram exceeds $P^\star$.  Denoting by $\Omega$ the grid of frequencies onto which the periodogram is computed,
\begin{align}
   \mathrm{FAP} =  p(\max_{\omega \in \Omega} \mathcal{P}(\omega) \geqslant  \mathcal{P}^\star \mid H_0)
\end{align}
We further use a $\ell_1$ periodogram as defined in~\cite{hara2017}. This algorithm searches for a representation of the data in the Fourier domain, penalizing the sum of their amplitudes with a $\ell_1$ norm, thus enhancing the sparsity of the representation.

In section 2.3 of the main text, we define the FIP periodogram, whose definition is reproduced here. 
We choose the sets $\Theta_i$s as frequency intervals of fixed length, chosen from a grid.  The interval $k$ of the grid is defined as $[\omega_k - \Delta \omega/2, \omega_k + \Delta \omega/2]$, where $ \Delta \omega = 2\pi/T_\mathrm{obs}$, $T_\mathrm{obs}$ is the total observation time span, and  $ \omega_k = k\Delta \omega / N_\mathrm{oversampling}$. We take $N_\mathrm{oversampling} = 5$.  
For each interval in the grid, we compute the marginal probability to have a planet whose frequency lies in the interval, and its FIP as defined in Eq.~\eqref{eq:fipdef}\footnote{It is in fact more computationally efficient to loop over the samples and determine in which frequency interval they lie, see the Python example \url{https://github.com/nathanchara/FIP}.}. 


\begin{itemize}
	 	\item Periodogram + FAP:  The periodogram is computed with the same grid of frequencies as the one used to generate the data (from 1.5 to 100 d) and the correct covariance matrix.  
	 	If the FAP is below a certain threshold fixed in advance, we add a cosine and sine function at the period of the maximum peak to the linear base model and recompute the periodogram. The planet is added if the FAP is below the FAP threshold. We do not search for a third planet.  
	 	\item Periodogram + Bayes factor: same as above, but here the criterion to add a planet is that the Bayes factor  is above a certain threshold.   The evidences (Eq.~\eqref{eq:evidence}) are computed with the  distributions used in the simulations, in particular the period is left free between 1.5 and 100 days.  
	 	\item  $\ell_1$-periodogram\footnote{The $\ell_1$ periodogram code is available at \url{https://github.com/nathanchara/l1periodogram}} + FAP : we compute the $\ell_1$ periodogram~\citep{hara2017} with the same grid of frequencies as the one used to generate the data (from 1.5 to 100 d). If the FAP of the maximum peak is below a certain threshold, it is added to the base model of unpenalized vectors, the $\ell_1$ periodogram is recomputed, the FAP of the maximum peak is assessed. If it is below a certain threshold, the a planet detection is claimed. We do not look for a third planet. 
	 	\item $\ell_1$-periodogram + Bayes factor: same as above, but here the criterion to add a planet is that the Bayes factor comparing n+1 versus n planet models is above a certain threshold. 
	 	\item FIP: We compute the FIP periodogram, and select the two highest peaks. We select a period if its corresponding FIP is below a certain threshold. 
	 	\item PNP + FIP periodogram: here, to select the number of planets we order the peaks of the FIP periodogram with increasing FIP. We select the number of peaks corresponding to the highest posterior number of planets as defined in Eq.~\eqref{eq:pnp}. 
	 	\item FIP  periodogram + Bayes factor : the periods are selected as the highest peaks of the FIP periodogram and the number of planets is selected with the Bayes factor. This procedure is very close to~\cite{gregory2007} except that we use the FIP periodogram instead of the marginal distribution of periods for each planets. We do not take the approach of~\cite{gregory2007} to select the periods as nested sampling algorithms such as \textsc{polychord} tend to swap the periods of planets, such that marginal distributions are typically multi-modal. 
	 	\item  FIP periodogram + FAP : the periods are selected as the highest peaks of the FIP periodogram and the number of planets is selected with the false alarm probability. 
\end{itemize}

\bibliographystyle{imsart-nameyear}
\bibliography{biblio}

\end{document}